\providecommand{\U}[1]{\protect\rule{.1in}{.1in}}
\providecommand{\U}[1]{\protect\rule{.1in}{.1in}}
\providecommand{\U}[1]{\protect\rule{.1in}{.1in}}
\providecommand{\U}[1]{\protect\rule{.1in}{.1in}}
\newtheorem{theorem}{Theorem}
\newtheorem{corollary}[theorem]{Corollary}
\newtheorem{definition}[theorem]{Definition}
\newtheorem{example}[theorem]{Example}
\newtheorem{examples}[theorem]{Examples}
\newtheorem{lemma}[theorem]{Lemma}
\newtheorem{remark}[theorem]{Remark}
\newenvironment{proof}[1][Proof]{\noindent\textbf{#1.} }{\ \rule{0.5em}{0.5em}}
\newcommand{\Var}{\mbox{\rm Var}}
\begin{document}

\title{
Optimal dual martingales, their analysis  and application to new  algorithms for Bermudan products$^{1,2}$
}

\author{John Schoenmakers \and Junbo Huang \and Jianing Zhang
}

\footnotetext[1]{
 This paper is an extended version of the previous preprint \citet{S1}. Supported by  DFG Research Center \textsc{Matheon}
``Mathematics for Key Technologies'' in Berlin.\\
}

\footnotetext[2]{
Weierstrass Institute for Applied Analysis and Stochastics, Mohrenstr. 39, 10117 Berlin, \{schoenma,huang,zhang\}@wias-berlin.de
}

\maketitle


\begin{abstract}
In this paper we introduce and study the concept of optimal and surely optimal
dual martingales in the context of dual valuation of Bermudan options, and outline
the development of new algorithms in this context. We provide a characterization theorem, a theorem which gives conditions for a martingale to be surely
optimal, and a stability theorem concerning
martingales which are near to be surely optimal in a sense. Guided by these results we
develop a framework of backward algorithms for constructing such a martingale.
  In turn this martingale may then be utilized for computing
an upper bound of the Bermudan product. The methodology  is purely dual in the sense that it doesn't require certain input approximations to the Snell envelope.

In an It\^o-L\'evy environment we outline a  particular regression based backward algorithm which allows for computing dual upper bounds without nested
Monte Carlo simulation. Moreover, as a by-product this algorithm also
provides approximations to the continuation values of the product, which in turn determine a stopping policy. Hence,  we may obtain lower bounds at the same time.

In a first numerical study we demonstrate the backward dual regression algorithm in a Wiener environment at well known benchmark examples.  It turns out
that the method  is at least comparable to the one in \citet{BBS}  regarding accuracy,  but regarding computational robustness there are even
several  advantages.
\\[0.2cm]\emph{Keywords:} Bermudan options, duality,
Monte Carlo simulation, linear regression, surely optimal martingales, backward algorithm. \\[0.2cm]\emph{MSC:} 62L15, 65C05, 91B28
\end{abstract}

\section{Introduction}

It is well-known that the evaluation of Bermudan callable derivatives comes down
to solving an optimal stopping problem. For many callable exotic products, e.g. interest products, the underlying state space is high-dimensional
however. As such these products are usually computationally expensive to solve with
deterministic (PDE) methods and therefore simulation based (Monte Carlo)
methods are called for. The first developments in this respect concentrated on
the construction of a ``good'' exercise policy. We mention, among others,
regression based methods by \citet{Car}, \citet{LS}, and \citet{TV}, the
stochastic mesh method of \citet{BroGla04}, and quantization algorithms by
\citet{BalPag03}. Especially for very high dimensions, \citet{KS} developed a
policy improvement approach which can be effectively combined with \citet{LS}
for example (see \citet{BKS} and \citet{BKS1}).

As a common feature, the aforementioned simulation methods provide lower
biased estimates for the Bermudan product under consideration. As a new
breakthrough, \citet{Ro}, and \citet{HK} introduced a dual approach, which
comes down to minimizing over a set of martingales rather than maximizing
over a family of stopping times. By its very nature the dual approach gives
upper biased estimates for the Bermudan product and after its discovery
several numerical algorithms for computing dual upper bounds have been proposed.
Probably the most popular one is the method of \citet{AB}, although this
method requires nested Monte Carlo simulation (see also \citet{KS} and
\citet{S}). In a Wiener environment, \citet{BBS} provides a fast generic method
for computing dual upper bounds which avoids nested simulations. Further \citet{BrSmSu} consider
dual optimization via enlarging the information were an exercise decision may depend on. In this setting they also provide
an example were a tight dual upper bound can be obtained by non-nested simulation.

The algorithms for computing dual upper bounds so far have in common that they
start with some given ``good enough'' approximation of the Snell envelope and
then construct the Doob martingale due to this approximation. In a recent
paper \citet{Ro1}, points out how to construct a particular 'good' martingale
via a sequence of martingales which are constant on an even bigger time
interval. In this construction no input approximation to the Snell envelope is
used.
The methods proposed in this paper have some flavor of the method of \citet{Ro1}, in the sense that no approximation to the Snell envelope is
involved either. In a recent paper \citet{Desai} treat the dual problem by methods from convex optimization theory.

The structure of this paper is as follows. Starting
with a short resume of well-known facts
on Bermudan derivatives in Section~\ref{secberm}, we analyze in
Section~\ref{sect_sure} the almost sure property of
the dual representation in detail. There we introduce the concept of a surely optimal martingale, which is loosely
speaking, a martingale that minimizes the dual representation with a
particular almost sure property. In this respect we
will point out that a martingale which minimizes the dual representation is
not necessarily surely optimal, and on the other hand, a surely optimal
martingale is generally not unique.

In Section~\ref{charsec} we present, as one of the main contributions of this
paper, a characterization theorem for surely optimal martingales (Theorem~\ref{surechar}). Moreover, we provide another result that guarantees that a martingale
is surely optimal if it satisfies a certain measurability criterion (Theorem~\ref{sureth}).

In  applications of the algorithm of \citet{AB} one generally observes that the lower the variance of the upper bound estimator, i.e. the closer the
corresponding martingale is to a surely optimal one,
the sharper is the corresponding dual upper bound.
Actually this
observation was not well studied from a mathematical point of view so far. In
Section~\ref{secnear} we study this phenomenon and, as a next main contribution,
give an explanation of it by Theorem~\ref{near} and Corollary~\ref{stab1}. In fact, the latter
corollary may be considered a stability statement connected to Theorem~\ref{sureth}.

Guided by the new theoretical insights we develop in
Section~\ref{secalg}
algorithms  for constructing dual martingales that are based on minimization
of the  variance (respectively expected conditional variance) of  corresponding dual representations and estimators.
In this context we present in an It\^o-L\'evy environment a regression based backward procedure that constructs a dual martingale via  minimizing backwardly  in time the expected (conditional) variances of the  dual estimators corresponding to the Snell envelope.
We so obtain  a martingale that allows for computing upper bounds without nested Monte Carlo (like in \citet{BBS}). Moreover we obtain, as a
by-product,  estimations of continuation
values. Thus, as a result, we end up with a procedure that computes upper
bounds as well as lower bounds simultaneously via a non-nested simulation procedure. The
procedure is quite easy to implement and may be considered as a valuable  alternative to the non-nested method of \citet{BBS}, where a dual martingale
is obtained by constructing  a discretized  Clark-Ocone derivative of some (input) approximation to the Snell envelope via regression. In particular, our new
procedure only requires regression at each exercise  date, in contrast to the procedure of  \citet{BBS} that requires regression at each time point
of a sufficient refinement of the exercise grid.

In Section~\ref{newsecnum}, we present a numerical study of our algorithm. We illustrate at two
multi-dimensional benchmark products (one of which is also considered in \citet{BBS})  a backward regression algorithm  that,
regarding accuracy and computational effort, produces upper bounds that show to be at least
of the same quality as those in \citet{BBS},  and    fast lower bounds that are overall better than in \cite{BBS} moreover.
 In an Appendix, we provide standard results from Statistics which are used for several technical arguments in Section \ref{secalg0}.

\section{Bermudan derivatives and optimal stopping}

\label{secberm}

Let $(Z_{i}:$ $i=0,1,\ldots,T)$\footnote{For notational convenience we have
chosen for this stylized time set. The reader may reformulate all statements
and results in this paper for a general discrete time set $\{T_{0}%
,T_{1},\ldots, T_{J}\}$ in a trivial way.} be a non-negative stochastic
process in discrete time on a filtered probability space $(\Omega
,\mathcal{F},P),$ adapted to a filtration $\mathbb{F}:=(\mathcal{F}%
_{i}:\,0\leq i\leq T)$ which satisfies $E|Z_{i}|<\infty,$ for $0\leq i\leq T.$
The measure $P$ may be considered as a pricing measure and the process $Z$ may
be seen as a (discounted) cash-flow which an investor may exercise once in
the time set $\{0,...,T\}.$ Hence, she is faced with a Bermudan product. A well-known fact is that a fair price of such a derivative is given by the Snell
envelope
\begin{equation}
Y_{i}^{\ast}=\sup_{\tau\in\{i,...,T\},}E_iZ_{\tau},\text{ \ \ }0\leq i\leq T,
\label{stop}%
\end{equation}
at time $i=0.$ In (\ref{stop}), $\tau$ denotes a stopping time, $E_{i}%
:=E_{\mathcal{F}_{i}}$ denotes the conditional expectation with respect to the
$\sigma$-algebra $\mathcal{F}_{i},$ and $\sup$ ($\inf$) is to be understood as
\textit{essential supremum} (\textit{essential infimum}) if it ranges over an
uncountable family of random variables. Let us recall some well-known facts
(e.g. see \citet{Ne1}).

\begin{enumerate}
\item The Snell envelope $Y^{*}$ of $Z$ is the smallest super-martingale that
dominates $Z$.

\item A family of optimal stopping times is given by
\[
\tau_{i}^{\ast}=\inf\{j:j\geq i,\quad Z_{j}\geq Y_{j}^{\ast}\},\quad0\leq
i\leq T.
\]
In particular,
\[
Y_{i}^{\ast}=E_iZ_{\tau_{i}^{\ast}},\text{ \ \ }0\leq i\leq T,
\]
and the above family is the family of first optimal stopping times if several
optimal stopping families exist.
\end{enumerate}

The optimal stopping problem (\ref{stop}) has a natural interpretation from the
point of view of the option holder: she seeks for an optimal exercise strategy
which optimizes her expected payoff. On the other hand, the seller of the
option rather seeks for the minimal cash amount (smallest supermartingale) he
has to have at hand in any case the holder of the option exercises.

\section{Duality and surely optimal martingales}

\label{sect_sure}

We briefly recall the dual approach proposed by \citet{Ro} and, independently,
\citet{HK}. The dual approach is based on the following observation: for any
martingale $(M_{j})$ with $M_{0}=0$ we have
\begin{equation}
Y_{0}^{\ast}=\sup_{\tau\in\{0,\ldots,T\}}E_{0}Z_{\tau}\leq\sup_{\tau
\in\{0,\ldots,T\}}E_{0}\left(  Z_{\tau}-M_{\tau}\right)  \leq E_{0}\max_{0\leq
j\leq T}\left(  Z_{j}-M_{j}\right),  \label{Rogers}%
\end{equation}
hence the right-hand side provides an upper bound for $Y_{0}^{\ast}$.
\citet{Ro} and \citet{HK} showed that (\ref{Rogers})  holds with equality
 for the martingale part of the Doob decomposition of
$Y^{\ast},$ i.e. $Y_{j}^{\ast}=Y_{0}^{\ast}+M_{j}^{\ast}-A_{j}^{\ast},$ where
$M^{\ast}$ is a martingale with $M_{0}^{\ast}=0,$ and $A^{\ast}$ is
predictable with $A_{0}^{\ast}=0.$ More precisely we have
\begin{equation}
M_{j}^{\ast}=\sum\limits_{l=1}^{j}\left(  Y_{l}^{\ast}-E_{l-1}Y_{l}^{\ast
}\right)  ,\quad A_{j}^{\ast}=\sum\limits_{l=1}^{j}\left(  Y_{l-1}^{\ast
}-E_{l-1}Y_{l}^{\ast}\right)  , \label{AddMart}%
\end{equation}
from which we see $A^{\ast}$ is non-decreasing due to $Y^\ast$ being a supermartingale. In addition,
they showed that%
\begin{equation}
Y_{0}^{\ast}=\max_{0\leq j\leq T}\left(  Z_{j}-M_{j}^{\ast}\right)  \text{
\ \ a.s.} \label{alsu}%
\end{equation}
The next lemma, by \citet{KS2}, provides a somewhat more general class of
supermartingales, which turns relation (\ref{Rogers}) into an equality such that
moreover (\ref{alsu}) holds.

\begin{lemma}
\label{AddDual} Let $S$ be a supermartingale  with
$S_{0}=0,$. Assume that $Z_{j}-Y_{0}^{\ast}\leq S_{j}$,
$1\leq j\leq T$. It then holds that
\begin{equation}
Y_{0}^{\ast}=\max_{0\leq j\leq T}(Z_{j}-S_{j})\quad a.s. \label{AddDualEq}%
\end{equation}
For the proof see{ }\citet{KS2}.
\end{lemma}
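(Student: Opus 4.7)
The plan is to establish \eqref{AddDualEq} by a two-sided squeeze. Set $X := \max_{0\le j\le T}(Z_j - S_j)$. I will first show that $X \le Y_0^\ast$ almost surely, and then that $E_0 X \ge Y_0^\ast$; since $Y_0^\ast$ is $\mathcal{F}_0$-measurable and hence equal to $E_0 Y_0^\ast$, these two inequalities force $Y_0^\ast - X$ to be a non-negative random variable with zero conditional expectation, hence zero a.s.

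For the pointwise upper bound, I would argue term by term. For each index $1\le j\le T$, the hypothesis $Z_j - Y_0^\ast \le S_j$ rearranges directly to $Z_j - S_j \le Y_0^\ast$, while for $j=0$ we have $Z_0 - S_0 = Z_0 \le Y_0^\ast$ because the Snell envelope dominates the cash flow $Z$ (using $S_0=0$ and the well-known property of $Y^\ast$ recalled in Section~\ref{secberm}). Taking the maximum over $j\in\{0,\ldots,T\}$ yields $X \le Y_0^\ast$ a.s.

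For the reverse inequality in conditional expectation, I would invoke optional sampling for the supermartingale $S$: for every stopping time $\tau\in\{0,\ldots,T\}$ one has $E_0 S_\tau \le S_0 = 0$, so
\begin{equation*}
E_0 Z_\tau \;\le\; E_0(Z_\tau - S_\tau) \;\le\; E_0 \max_{0\le j\le T}(Z_j - S_j) \;=\; E_0 X.
\end{equation*}
Taking the essential supremum over $\tau$ on the left produces $Y_0^\ast \le E_0 X$, which is the same Rogers--Haugh--Kogan chain \eqref{Rogers} but now applied to the supermartingale $S$ instead of a martingale.

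Combining the two, the $\mathcal{F}_0$-measurable variable $Y_0^\ast - X$ satisfies $Y_0^\ast - X \ge 0$ a.s.\ and $E_0(Y_0^\ast - X) \le 0$, so $Y_0^\ast = X$ a.s., giving \eqref{AddDualEq}. The only subtle point, and the main thing to be careful about, is the use of optional sampling for $S$: since the time horizon is finite and deterministic this is immediate in discrete time, but it is essential that $S$ be a genuine supermartingale (not merely adapted) for the bound $E_0 S_\tau \le 0$ to hold, which is precisely what distinguishes this lemma from \eqref{Rogers}.
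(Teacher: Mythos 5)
Your proof is correct and complete: the pointwise bound $Z_j-S_j\le Y_0^\ast$ from the hypothesis (together with $Z_0\le Y_0^\ast$), combined with the optional-sampling chain $Y_0^\ast=\sup_\tau E_0 Z_\tau\le E_0\max_j(Z_j-S_j)$ and the observation that a non-negative variable with non-positive conditional expectation vanishes a.s., is exactly the standard squeeze argument. The paper itself does not reproduce a proof but defers to \citet{KS2}, where the argument runs along the same lines as yours (and is the same technique the paper uses again in the proof of Theorem~\ref{sureth}), so your write-up is a faithful, self-contained substitute.
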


\begin{examples}
\label{ex} Obviously, by taking for $S$ the Doob martingale as constructed in (\ref{AddMart}),
Lemma \ref{AddDual} applies. However, the Doob martingale is not the only one. For example, in
the case $Z>0$ a.s. we may also take
\[
S_{j}=(N_{j}^{\ast}-1)Y_{0}^{^{\ast}},%
\]
where $N^{\ast}$ is the multiplicative Doob part of the Snell envelope. More precisely,
$Y_{j}^{\ast}=Y_{0}^{\ast}N_{j}^{\ast}B_{j}^{\ast}$ for a martingale $N^{\ast
}$ with $N_{0}^{\ast}=1$ and predictable $B^{\ast}$ with $B_{0}^{\ast}=1.$
Hence
\begin{equation}
\label{muld}N_{j}^{\ast}=%
{\displaystyle\prod\limits_{l=1}^{j}}
\frac{Y_{l}^{\ast}}{E_{l-1}Y_{l}^{\ast}},\quad B_{j}^{\ast}=%
{\displaystyle\prod\limits_{l=1}^{j}}
\frac{E_{l-1}Y_{l}^{\ast}}{Y_{l-1}^{\ast}}.
\end{equation}
Indeed, since $B^{\ast}$ is non-increasing due to $Y^\ast$ being a supermartingale, we
have%
\[
S_{j}=Y_{0}^{^{\ast}}\left(  \frac{Y_{j}^{\ast}}{Y_{0}^{\ast}B_{j}^{\ast}%
}-1\right)  \geq Y_{0}^{^{\ast}}\left(  \frac{Y_{j}^{\ast}}{Y_{0}^{\ast}%
}-1\right)  =Y_{j}^{\ast}-Y_{0}^{^{\ast}}\geq Z_{j}-Y_{0}^{^{\ast}},%
\]
thus, Lemma \ref{AddDual} applies again.
\end{examples}

The multiplicative Doob decomposition in (\ref{muld}) is used by \citet{Jam}
for constructing a multiplicative dual representation. In a comparative study,
\citet{GC} pointed out however, that from a numerical point of view additive
dual algorithms perform better due to the nice almost sure property
(\ref{alsu}).

\begin{remark}
\label{counter} It is \textbf{not} true that for any martingale $M$ which
turns (\ref{Rogers}) into equality the almost sure statement (\ref{alsu})
holds. As a simple counterexample, consider $T=1,$ $Z_{0}=0,$ $Z_{1}=2,$
$M_{0}=0,$ and $M_{1}=\pm1$ each with probability $1/2.$ Indeed, we see that
$Y_{0}^{\ast}=2=E_{0}(2-M_{1})=E_{0}\max(0,2-M_{1}),$ but, $Y_{0}^{\ast}%
\neq\max(0,2-M_{1})$ a.s.
\end{remark}

In order to have a unified dual representation for the Snell envelope
$Y_{i}^{\ast}$ at any $i,$ it is convenient to drop the assumption that
martingales start at zero. We then may restate the dual theorem as
\begin{align}
Y_{i}^{\ast}  &  =\inf_{M\in\mathcal{M}}E_{i}\max_{i\leq j\leq T}\left(
Z_{j}-M_{j}+M_{i}\right) \label{uni}\\
&  =\max_{i\leq j\leq T}\left(  Z_{j}-M_{j}^{\ast}+M_{i}^{\ast}\right)  \text{
\ a.s.,} \label{uni1b}%
\end{align}
for any $i,$ $0\leq i\leq T,$ where $\mathcal{M}$ is the set of all
martingales and $M^{\ast}$ is the Doob martingale part of $Y^{\ast}.$

In view of Remark~\ref{counter} and Examples~\ref{ex}, a martingale for which
the infimum (\ref{uni}) is attained must not necessarily satisfy an almost
sure property such as (\ref{uni1b}), and, martingales which do satisfy such almost sure property are
generally not unique. We hence propose the following concept of \emph{surely optimal} martingales.

\begin{definition}
We say that a martingale $M$ is \textbf{surely optimal} for the Snell envelope
$Y^{\ast}$ at a time $i,$ $0\leq i\leq T,$ if it holds
\begin{equation}
Y_{i}^{\ast}=\max_{i\leq j\leq T}\left(  Z_{j}-M_{j}+M_{i}\right)  \text{
\ a.s.}  \label{uni1a}
\end{equation}
\end{definition}

\begin{remark}
Obviously, the Doob martingale of $Y^{*}$ is {surely optimal} at
each $i,$ $0\le i\le T,$ and any martingale $M$ is trivially surely optimal at $i=T.$
However, it is
\textbf{not} true that sure optimality for some $i$ with $i<T$ implies sure
optimality at $i+1.$ As a counterexample let us consider $T=2,$ and $Z_{0}=4,$
$Z_{1}=0,$ $Z_{2}=2.$ Take as martingale $M_{0}=0,$ $M_{1}=\pm1,$ each with
probability $1/2,$ and $M_{2}=M_{1}\pm1,$ each with probability $1/2$
conditional on $M_{1}.$ Then $\max_{0\leq j\leq2}\left(  Z_{j}-M_{j}%
+M_{0}\right)  =4$ a.s. Since we have trivially $Y_{0}^{\ast
}=4,$ $M$ is surely optimal at $i=0.$ But, $\max_{1\leq j\leq2}\left(
Z_{j}-M_{j}+M_{1}\right)  =2-M_{2}+M_{1}\notin\mathcal{F}_{1},$ so $M$ is not
surely optimal for $Y^{\ast}$ at $i=1.$
\end{remark}

\section{Characterization of surely optimal martingales}\label{charsec}
In this section we give a characterization of martingales that are surely optimal for
all $i=0,\ldots,T.$
\begin{theorem}
\label{surechar} A martingale $M$ with $M_{0}=0$ is surely optimal for
$i=0,\ldots,T,$ if and only if there exists a sequence of adapted random
variables $(\zeta_{i})_{0\leq i\leq T,}$ such that $E_{i-1}\zeta_{i}=1,$ and
$\zeta_{i}\geq0$ for all $0<i\leq T,$ and
\begin{equation}
M_{i}=M_{i}^{\ast}-A_{i}^{\ast}+\sum_{l=1}^{i}\left(  A_{l}^{\ast}%
-A_{l-1}^{\ast}\right)  \zeta_{i}, \label{unrep}%
\end{equation}
where, respectively, $M^{\ast}$ is the Doob martingale and $A_{i}^{\ast}$ the
predictable process of the Snell envelope $Y^{\ast}$ as given in (\ref{AddMart}).
\end{theorem}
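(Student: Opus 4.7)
My plan is to work with the incremental form of (\ref{unrep}), namely
\begin{equation*}
M_i - M_{i-1} = (M_i^{\ast}-M_{i-1}^{\ast}) + (A_i^{\ast}-A_{i-1}^{\ast})(\zeta_i-1),
\end{equation*}
obtained by reading the summand in (\ref{unrep}) with the running index $\zeta_l$. In this form, $E_{i-1}\zeta_i = 1$ automatically gives $E_{i-1}(M_i-M_{i-1})=0$, so the two conditions imposed on $(\zeta_i)$ precisely encode the martingale property of the right-hand side of (\ref{unrep}). The argument then splits into sufficiency and necessity.

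For sufficiency, I would combine $Z_j\leq Y_j^{\ast}$ with the additive Doob identity $Y_j^{\ast} = Y_i^{\ast} + (M_j^{\ast}-M_i^{\ast}) - (A_j^{\ast}-A_i^{\ast})$ and the telescoped incremental form above to get
\begin{equation*}
Z_j - M_j + M_i \;\leq\; Y_i^{\ast} - \sum_{l=i+1}^{j}(A_l^{\ast}-A_{l-1}^{\ast})\zeta_l \;\leq\; Y_i^{\ast} \quad\text{a.s.},
\end{equation*}
the last inequality because $A^{\ast}$ is non-decreasing and $\zeta_l\geq 0$. Taking the max over $j\in\{i,\ldots,T\}$ and comparing with the general dual bound $Y_i^{\ast}\leq E_i\max_{i\leq j\leq T}(Z_j-M_j+M_i)$ from (\ref{uni}) forces a.s.\ equality, i.e.\ sure optimality of $M$ at $i$.

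For necessity I would first extract a key pathwise inequality. Using sure optimality at $i-1$ together with the trivial bound $\max_{i-1\leq j\leq T}(Z_j-M_j+M_{i-1})\geq \max_{i\leq j\leq T}(Z_j-M_j+M_{i-1}) = Y_i^{\ast} + M_{i-1} - M_i$ (the last equality by sure optimality at $i$), one obtains $Y_{i-1}^{\ast}\geq Y_i^{\ast} + M_{i-1} - M_i$ and hence
\begin{equation*}
M_i - M_{i-1} \;\geq\; (M_i^{\ast}-M_{i-1}^{\ast}) - (A_i^{\ast}-A_{i-1}^{\ast}) \quad\text{a.s.}
\end{equation*}
On the $\mathcal{F}_{i-1}$-measurable set $\{A_i^{\ast}>A_{i-1}^{\ast}\}$ I would then define $\zeta_i$ by inverting the desired incremental identity, and set $\zeta_i:=1$ on the complement; non-negativity follows at once from the key inequality, while $E_{i-1}\zeta_i=1$ follows because $M$ and $M^{\ast}$ are both martingales.

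The only delicate point, which I expect to be the main obstacle, is that on the event $\{A_i^{\ast}=A_{i-1}^{\ast}\}$ the desired identity reduces to $M_i - M_{i-1} = M_i^{\ast}-M_{i-1}^{\ast}$, which is not immediate from the key inequality alone. I plan to close this gap by observing that, restricted to this $\mathcal{F}_{i-1}$-measurable event, the key inequality exhibits a non-negative random variable whose $\mathcal{F}_{i-1}$-conditional expectation vanishes (both $M$ and $M^{\ast}$ are martingales), forcing a.s.\ equality on that event. Summing the resulting increments from $1$ to $i$ then yields (\ref{unrep}).
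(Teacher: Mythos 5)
Your proof is correct. The sufficiency direction coincides with the paper's: telescoping the increments, using $Z_j\le Y_j^\ast$ together with $\zeta_l\ge 0$ and the monotonicity of $A^\ast$ to bound $\max_{i\le j\le T}(Z_j-M_j+M_i)$ by $Y_i^\ast$ pathwise, and then invoking the dual inequality (\ref{uni}) to force almost sure equality. (Your reading of the summand in (\ref{unrep}) with the running index $\zeta_l$ is indeed the intended statement.) The necessity direction reaches the same destination by a genuinely leaner route. The paper works with the full equality $Y_{i-1}^\ast=\max(Z_{i-1},\,M_{i-1}-M_i+Y_i^\ast)$, rewrites $Y_{i-1}^\ast-Z_{i-1}$ as a positive part, and splits on $\{Z_{i-1}<Y_{i-1}^\ast\}$ versus $\{Z_{i-1}=Y_{i-1}^\ast\}$ (using that strict inequality forces $A_i^\ast=A_{i-1}^\ast$) to extract the two identities (\ref{sc}) and (\ref{sc0}) and the auxiliary variable $\mu_i$. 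You instead retain only the inequality $Y_{i-1}^\ast\ge M_{i-1}-M_i+Y_i^\ast$, which says precisely that $V_i:=(M_i-M_{i-1})-(M_i^\ast-M_{i-1}^\ast)+(A_i^\ast-A_{i-1}^\ast)\ge 0$; since $E_{i-1}V_i=A_i^\ast-A_{i-1}^\ast$ by the martingale properties and the predictability of $A^\ast$, your case split on the $\mathcal{F}_{i-1}$-measurable event $\{A_i^\ast=A_{i-1}^\ast\}$ (a non-negative variable with vanishing conditional mean is zero) does all the work, and $\zeta_i:=V_i/(A_i^\ast-A_{i-1}^\ast)$ on the complement inherits non-negativity and unit conditional expectation for free. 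Your $V_i$ is exactly the paper's $\mu_i$, so the two arguments produce the same $\zeta_i$; what your version buys is that it bypasses the positive-part manipulations and the case analysis on whether $Z_{i-1}$ touches the Snell envelope, and it makes transparent that sure optimality at consecutive dates enters only through the pathwise monotonicity of $Y^\ast-M$ --- which is precisely the content of the Corollary following the theorem in the paper.
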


\begin{proof}
i) Let us assume that $M$ is surely optimal as stated. Then by
(\ref{uni1a}) it holds for any $0<i\leq T,$%
\begin{align}
Y_{i-1}^{\ast}  &  =\max_{i-1\leq j\leq T}\left(  Z_{j}-M_{j}+M_{i-1}\right)
\nonumber\\
&  =\max(Z_{i-1},M_{i-1}-M_{i}+\max_{i\leq j\leq T}\left(  Z_{j}-M_{j}%
+M_{i}\right)  )\nonumber\\
&  =\max(Z_{i-1},M_{i-1}-M_{i}+Y_{i}^{\ast}). \label{df}%
\end{align}
Since $Z_{i-1}\leq Y_{i-1}^{\ast},$ and since $Z_{i-1}<Y_{i-1}^{\ast}$
implies $A_{i-1}^{\ast}=A_{i}^{\ast},$ we obtain from (\ref{df}) and the Doob decomposition $Y^\ast_i = Y^\ast_0 + M^\ast_i - A^\ast_i$
\begin{align*}
Y_{i-1}^{\ast}-Z_{i-1}  &  =\left(  M_{i-1}-M_{i}+Y_{i}^{\ast}-Z_{i-1}\right)
^{+}\\
&  =\left(  M_{i-1}-M_{i}+M_{i}^{\ast}-M_{i-1}^{\ast}-A_{i}^{\ast}%
+A_{i-1}^{\ast}+Y_{i-1}^{\ast}-Z_{i-1}\right)  ^{+}\\
&  =1_{Z_{i-1}<Y_{i-1}^{\ast}}\left(  M_{i-1}-M_{i}+M_{i}^{\ast}-M_{i-1}%
^{\ast}+Y_{i-1}^{\ast}-Z_{i-1}\right)  ^{+}\\
&  +1_{Z_{i-1}=Y_{i-1}^{\ast}}\left(  M_{i-1}-M_{i}+M_{i}^{\ast}-M_{i-1}%
^{\ast}-A_{i}^{\ast}+A_{i-1}^{\ast}\right)  ^{+}.
\end{align*}
So we must have%
\begin{gather*}
1_{Z_{i-1}<Y_{i-1}^{\ast}}\left(  Y_{i-1}^{\ast}-Z_{i-1}\right)  =\\
1_{Z_{i-1}<Y_{i-1}^{\ast}}\left(  M_{i-1}-M_{i}+M_{i}^{\ast}-M_{i-1}^{\ast
}+Y_{i-1}^{\ast}-Z_{i-1}\right)  ,\text{ \ \ and}\\
1_{Z_{i-1}=Y_{i-1}^{\ast}}\left(  M_{i-1}-M_{i}+M_{i}^{\ast}-M_{i-1}^{\ast
}-A_{i}^{\ast}+A_{i-1}^{\ast}\right)  ^{+}=0,
\end{gather*}
respectively. Hence we get%
\begin{gather}
1_{Z_{i-1}<Y_{i-1}^{\ast}}\left(  M_{i-1}-M_{i}+M_{i}^{\ast}-M_{i-1}^{\ast
}\right)  =0,\text{ \ \ and}\label{sc}\\
1_{Z_{i-1}=Y_{i-1}^{\ast}}\left(  M_{i-1}-M_{i}+M_{i}^{\ast}-M_{i-1}^{\ast
}-A_{i}^{\ast}+A_{i-1}^{\ast}\right)  =-1_{Z_{i-1}=Y_{i-1}^{\ast}}\mu_{i},
\label{sc0}%
\end{gather}
for some non-negative $\mathcal{F}_{i}$-measurable random variable $\mu_{i}.$
W.l.o.g. we assume that $\mu_{i}\equiv0$ on the set $\{Z_{i-1}<Y_{i-1}^{\ast
}\}.$ By taking $\mathcal{F}_{i-1}$ conditional expectations on both sides of
(\ref{sc0}), and using the martingale property of both $M$ and $M^{\ast},$ and
the predictability of $A^{\ast}$, it then follows that
\begin{equation}
E_{i-1}\mu_{i}=1_{Z_{i-1}=Y_{i-1}^{\ast}}E_{i-1}\mu_{i}=1_{Z_{i-1}%
=Y_{i-1}^{\ast}}\left(  A_{i}^{\ast}-A_{i-1}^{\ast}\right)  . \label{part}%
\end{equation}
In particular, since $\mu_{i}\geq0$ almost surely, it follows from
(\ref{part}) that $\mu_{i}=0$ on the set $\{A_{i}^{\ast}=A_{i-1}^{\ast}\}$
(in which $\{Z_{i-1}<Y_{i-1}^{\ast}\}$ is contained as a subset). We next define
\begin{equation}
\zeta_{i}:=\begin{cases}
\left(  A_{i}^{\ast}-A_{i-1}^{\ast}\right)  ^{-1}\mu_{i}, &\text{if }A_{i}^{\ast}>A_{i-1}^{\ast},\\
1, &\text{else,}%
\end{cases}\label{ze}%
\end{equation}
and we see that we have a.s. $\mu_{i}=\left(  A_{i}^{\ast}-A_{i-1}^{\ast}\right) \zeta_{i}$. By (\ref{part}) we have (using the convention $0\cdot\infty=0$)
\begin{align*}
E_{i-1}\zeta_{i}  &  =1_{A_{i}^{\ast}>A_{i-1}^{\ast}}E_{i-1}\,\,\left(
A_{i}^{\ast}-A_{i-1}^{\ast}\right)  ^{-1}\mu_{i}+1_{A_{i}^{\ast}=A_{i-1}%
^{\ast}}\\
&  =1_{A_{i}^{\ast}>A_{i-1}^{\ast}}1_{Z_{i-1}=Y_{i-1}^{\ast}}+1_{A_{i}^{\ast
}>A_{i-1}^{\ast}}1_{Z_{i-1}<Y_{i-1}^{\ast}}+1_{A_{i}^{\ast}=A_{i-1}^{\ast}}=1,
\end{align*}
since the middle term is trivially zero. We thus obtain from (\ref{sc}) and
(\ref{sc0})
\[
M_{i-1}-M_{i}+M_{i}^{\ast}-M_{i-1}^{\ast}-A_{i}^{\ast}+A_{i-1}^{\ast}=-\left(
A_{i}^{\ast}-A_{i-1}^{\ast}\right)  \zeta_{i},
\]
from which (\ref{unrep}) follows.

\medskip

ii) Conversely, if a martingale
$M$ satisfies (\ref{unrep}), we have for any $0\leq i\leq T,$%
\begin{align*}
\max_{i\leq j\leq T}\left(  Z_{j}-M_{j}+M_{i}\right)   &  =\max_{i\leq j\leq
T}\left(  Z_{j}-M_{j}^{\ast}+A_{j}^{\ast}-\sum_{l=1}^{j}\left(  A_{l}^{\ast
}-A_{l-1}^{\ast}\right)  \zeta_{l}\right. \\
&  \left.  +M_{i}^{\ast}-A_{i}^{\ast}+\sum_{l=1}^{i}\left(  A_{l}^{\ast
}-A_{l-1}^{\ast}\right)  \zeta_{l}\right) \\
&  =Y_{i}^{\ast}+\max_{i\leq j\leq T}\left(  Z_{j}-Y_{j}^{\ast}-\sum
_{l=i+1}^{j}\left(  A_{l}^{\ast}-A_{l-1}^{\ast}\right)  \zeta_{l}\right)  \leq
Y_{i}^{\ast},
\end{align*}
and then by (\ref{uni}) the almost sure optimality follows.
\end{proof}
\\ \ \\
By Theorem~\ref{surechar} we have immediately the following alternative
characterization of almost sure martingales. It basically says that a martingale is surely optimal if the Snell envelope can be representated in a way that resembles the Doob decomposition but where the predictable process is replaced by a process which is in general only adapted.

\begin{corollary}
A martingale $M$ with $M_{0}=0$ is surely optimal for $i=0,\ldots,T,$ if and
only if there exists an non-decreasing adapted process $N$ with $N_{0}=0$ such that\footnote{Note that $N$ is not assumed to be predictable.}%
\[
Y_{i}^{\ast}=Y_{0}^{\ast}+M_{i}-N_{i}.
\]
\end{corollary}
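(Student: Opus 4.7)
The corollary is an almost immediate repackaging of Theorem~\ref{surechar}, so my plan is short and relies on having that theorem in hand. The idea for the ``only if'' direction is to absorb the two adapted-but-not-predictable terms on the right of (\ref{unrep}) into a single process $N$. Concretely, if $M$ is surely optimal, Theorem~\ref{surechar} provides $\zeta_l\geq 0$ with $E_{l-1}\zeta_l=1$ and the representation $M_i=M_i^\ast - A_i^\ast + \sum_{l=1}^{i}(A_l^\ast - A_{l-1}^\ast)\zeta_l$. Define $N_i:=\sum_{l=1}^{i}(A_l^\ast - A_{l-1}^\ast)\zeta_l$; since $A^\ast$ is non-decreasing and $\zeta_l\geq 0$, every increment of $N$ is non-negative, so $N$ is non-decreasing and adapted with $N_0=0$. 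Using the Doob decomposition $Y_i^\ast = Y_0^\ast + M_i^\ast - A_i^\ast$, the identity $Y_i^\ast = Y_0^\ast + M_i - N_i$ follows by direct substitution.

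For the ``if'' direction, suppose $Y_i^\ast = Y_0^\ast + M_i - N_i$ with $N$ non-decreasing, adapted, $N_0=0$. I would first get the pointwise upper bound: since $Z_j\leq Y_j^\ast = Y_0^\ast + M_j - N_j$,
\begin{equation*}
Z_j - M_j + M_i \leq Y_0^\ast - N_j + M_i = Y_i^\ast + N_i - N_j \leq Y_i^\ast,
\end{equation*}
the last inequality being monotonicity of $N$. Hence $\max_{i\leq j\leq T}(Z_j - M_j + M_i)\leq Y_i^\ast$ a.s. Combined with the general dual lower bound $Y_i^\ast \leq E_i\max_{i\leq j\leq T}(Z_j - M_j + M_i)$ from (\ref{uni}), taking $E_i$ forces equality in $L^1$; the integrand is non-negative, so $\max_{i\leq j\leq T}(Z_j - M_j + M_i) = Y_i^\ast$ a.s., which is exactly sure optimality of $M$ at every $i$.

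I expect no real obstacle here: the construction of $N$ from $(\zeta_l)$ in one direction, and the sandwich argument in the other, are both routine once Theorem~\ref{surechar} and the universal inequality (\ref{uni}) are available. The only mildly delicate point is not to confuse non-decreasing adapted with predictable, which is precisely the content of the footnote in the statement.
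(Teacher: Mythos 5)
Your proof is correct and follows essentially the same route as the paper: the ``only if'' direction absorbs the terms of (\ref{unrep}) into $N_i:=\sum_{l=1}^{i}(A_l^{\ast}-A_{l-1}^{\ast})\zeta_l$, and the ``if'' direction uses $Z_j\leq Y_j^{\ast}$ together with monotonicity of $N$ to bound $\max_{i\leq j\leq T}(Z_j-M_j+M_i)\leq Y_i^{\ast}$ and then invokes (\ref{uni}). No gaps.
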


\begin{proof}
If $M$ is surely optimal as stated, we have by the ``if'' part of Theorem~\ref{surechar} (see (\ref{unrep})),%
\begin{equation}
Y_{i}^{\ast}-Y_{0}^{\ast}-M_{i}=-\sum_{l=1}^{i}\left(  A_{l}^{\ast}%
-A_{l-1}^{\ast}\right)  \zeta_{i}=-N_{i}, \label{ncor}
\end{equation}
with $N$ being adapted, non-decreasing and $N_{0}=0.$ Conversely, if
\[
Y_{i}^{\ast}=Y_{0}^{\ast}+M_{i}-N_{i}%
\]
for some martingale $M,$ $M_0=0,$ and non-decreasing adapted $N,$
$N_0=0,$ we consider for each $i,$ $0\leq i\leq T,$%
\[
\max_{i\leq j\leq T}\left(  Z_{j}-M_{j}+M_{i}\right)  =\max_{i\leq j\leq
T}\left(  Z_{j}-Y_{j}^{\ast}-N_{j}+Y_{i}^{\ast}+N_{i}\right)  \leq Y_{i},%
^{\ast}%
\]
and then apply (\ref{uni}) again.
\end{proof}

\medskip

We have the following remark.

\begin{remark}\label{impalg}
Let the martingale $M$ with $M_{0}=0$ be surely optimal for $i=0,\ldots,T.$
For the non-decreasing process $N$ defined by  (\ref{ncor}) it holds that%
\[
Y_{i}^{\ast}-M_{i}+M_{i-1}-Z_{i-1}=Y_{i-1}^{\ast}-N_{i}+N_{i-1}-Z_{i-1}%
=:U_{i},
\]
and since by  (\ref{ncor}), $N_{i}-N_{i-1}=\left(  A_{i}^{\ast}-A_{i-1}^{\ast
}\right)  \zeta_{i}$, we obtain from \eqref{df}%
\[
\left(  U_{i}\right)  ^{+}=Y_{i-1}^{\ast}-Z_{i-1}\text{ \ \ a.s.}%
\]
So, in particular we have that $\left(  U_{i}\right)  ^{+}$ is $\mathcal{F}%
_{i-1}$-measurable while $U_{i}$ itself is generally \textbf{not}, except for
the case where $M=M^{\ast}.$ A similar observation will encountered later on in \eqref{eq:was01}.
\end{remark}

From Theorem~\ref{surechar} it is clear that there exist infinitely many martingales
which are surely optimal for all $i=0,\ldots,T.$ In the following example we construct a one-parametric
family of such martingales which includes the Doob martingale of the Snell envelope.
\begin{example}
Let us assume $Z>0$ a.s. (if $Z$ is strictly bounded from below by a constant
$-K,$ we may consider the equivalent stopping problem due to $Z+K$). Then
$Y^{\ast}>0$ a.s., and for any $\alpha,$ $0\leq\alpha\leq1,$ we consider
\[
\zeta_{i}:=1-\alpha+\alpha\frac{Y_{i}^{\ast}}{E_{i-1}Y_{i}^{\ast}}%
=1-\alpha+\alpha\frac{N_{l}^{\ast}}{N_{l-1}^{\ast}},
\]
where $N^{\ast}$ is the martingale part of the multiplicative decomposition
$Y_{i}^{\ast}=Y_{0}^{\ast}N_{i}^{\ast}B_{i}^{\ast}$ of the Snell envelope (see
Examples~\ref{ex}). Obviously, it holds $E_{i-1}\zeta_{i}=1$ and $\zeta_{i}\geq0,$
and hence, by Theorem~\ref{surechar} we obtain for every $0\leq\alpha\leq1$ a
martingale%
\begin{align*}
M_{i}  & =M_{i}^{\ast}-A_{i}^{\ast}+\sum_{l=1}^{i}\left(  A_{l}^{\ast}%
-A_{l-1}^{\ast}\right)  \left(  1-\alpha+\alpha\frac{N_{l}^{\ast}}%
{N_{l-1}^{\ast}}\right)  \\
& =M_{i}^{\ast}-\alpha A_{i}^{\ast}+\alpha\sum_{l=1}^{i}\left(  A_{l}^{\ast
}-A_{l-1}^{\ast}\right)  \frac{N_{l}^{\ast}}{N_{l-1}^{\ast}},
\end{align*}
which is surely optimal for $i=0,...,T.$ Thus, for $\alpha=0$ (i.e.
$\zeta_{i}\equiv1$) we retrieve the standard Doob martingale of the Snell
envelope, and for $\alpha=1$ we obtain
\begin{align}
M_{i}  & =Y_{i}^{\ast}-Y_{0}^{\ast}+\sum_{l=1}^{i}\left(  A_{l}^{\ast}%
-A_{l-1}^{\ast}\right)  \frac{N_{l}^{\ast}}{N_{l-1}^{\ast}}\nonumber\\
& =\sum_{l=1}^{i}\left(  Y_{l}^{\ast}-Y_{l-1}^{\ast}+Y_{l-1}^{\ast}\left(
1-\frac{B_{l}^{\ast}}{B_{l-1}^{\ast}}\right)  \frac{N_{l}^{\ast}}%
{N_{l-1}^{\ast}}\right)  \nonumber\\
& =Y_{0}^{\ast}\sum_{l=1}^{i}\left(  N_{l}^{\ast}B_{l}^{\ast}-N_{l-1}^{\ast
}B_{l-1}^{\ast}+B_{l-1}^{\ast}\left(  1-\frac{B_{l}^{\ast}}{B_{l-1}^{\ast}%
}\right)  N_{l}^{\ast}\right)  \nonumber\\
& =Y_{0}^{\ast}\sum_{l=1}^{i}B_{l-1}^{\ast}\left(  N_{l}^{\ast}-N_{l-1}^{\ast
}\right)  .\label{newm}%
\end{align}
Note that this martingale differs from the martingale $Y_{0}^{^{\ast}}%
(N_{i}^{\ast}-1)$ from Example~\ref{ex} (they would coincide after dropping the factors
$B_{l-1}^{\ast}$). It is easy to show (using Theorem~\ref{surechar}
again) that the latter martingale is in general only optimal at $i=0,$ while
the martingale (\ref{newm}) is surely optimal for all $i=0,...,T,$ by construction.
\end{example}
The next theorem provides a key criterion for identifying
surely optimal martingales.
\begin{theorem}
\label{sureth} Let $Y^{\ast}$ be the Snell envelope of the cash-flow $Z$ and
let $M$ be any martingale. Then, for any $i\in\{0,...,T\}$ it holds
\[
\max_{i\leq j\leq T}\left(  Z_{j}-M_{j}+M_{i}\right)
\in\mathcal{F}_{i}\quad\Longrightarrow\quad \max_{i\leq
j\leq T}\left(  Z_{j}-M_{j}+M_{i}\right)  =Y_{i}^{\ast}.
\]

\end{theorem}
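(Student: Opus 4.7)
The plan is to prove the equality by establishing two inequalities. Write $X_i := \max_{i \le j \le T}(Z_j - M_j + M_i)$. The inequality $Y_i^{\ast} \le X_i$ is immediate from the unified dual representation (\ref{uni}): applied to our given martingale $M$ it yields $Y_i^{\ast} \le E_i X_i$, and under the hypothesis $X_i \in \mathcal{F}_i$ the conditioning is trivial, so $Y_i^{\ast} \le X_i$.

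For the reverse inequality $X_i \le Y_i^{\ast}$, the idea is to exhibit an admissible stopping time whose $\mathcal{F}_i$-conditional expected payoff equals $X_i$. I would set
\[
\sigma := \inf\{ j : i \le j \le T,\ Z_j - M_j + M_i = X_i \},
\]
the first time in $\{i,\ldots,T\}$ at which the maximum defining $X_i$ is attained. Since the maximum is taken over a finite index set, the infimum is attained, so $\sigma \le T$ a.s. Crucially, each event $\{Z_j - M_j + M_i = X_i\}$ belongs to $\mathcal{F}_j$ because $Z$ and $M$ are adapted and, by the hypothesis, $X_i \in \mathcal{F}_i \subseteq \mathcal{F}_j$ for $j \ge i$; hence $\sigma$ is an $\mathbb{F}$-stopping time bounded by $T$. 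By construction $Z_\sigma = X_i + M_\sigma - M_i$. Taking $E_i$ on both sides and applying the optional stopping theorem to the bounded stopping time $\sigma$ (so $E_i M_\sigma = M_i$) gives $E_i Z_\sigma = X_i$. By the very definition of $Y_i^{\ast}$ as an essential supremum of $E_i Z_\tau$ over stopping times $\tau\in\{i,\ldots,T\}$, one has $Y_i^{\ast} \ge E_i Z_\sigma = X_i$, and combining with the first inequality yields the claim.

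The main subtlety, and the only place where the hypothesis $X_i \in \mathcal{F}_i$ is genuinely used, is in checking that $\sigma$ is a stopping time; without that hypothesis the events $\{\sigma = j\}$ would not generally lie in $\mathcal{F}_j$ and the optional stopping step would collapse. Integrability concerns are minor, since $X_i$, $M_\sigma$, and $Z_\sigma$ are all dominated by finite sums of integrable random variables, so every conditional expectation appearing in the argument is well defined.
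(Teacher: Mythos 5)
Your proof is correct and takes essentially the same route as the paper: the paper also defines the first time $\tau_i=\inf\{j\ge i: Z_j-M_j+M_i\ge\vartheta_i\}$ (which coincides with your $\sigma$ since the maximum cannot be exceeded), applies optional sampling together with the hypothesis $\vartheta_i\in\mathcal{F}_i$ to get $Y_i^{\ast}\ge\vartheta_i$, and obtains the reverse inequality from (\ref{uni}). Your write-up merely makes explicit the point, left implicit in the paper, that the measurability hypothesis is what makes $\sigma$ a genuine stopping time.
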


\begin{proof}
Let us suppose
$\vartheta_{i}:=\max_{i\leq
j\leq T}\left(  Z_{j}-M_{j}+M_{i}\right) \in\mathcal{F}_{i}$  and define the stopping time%
\begin{align*}
\tau_{i}  & =\inf\left\{  j\geq i:Z_{j}-M_{j}+M_{i}%
\geq\vartheta_{i}\right\}.
\end{align*}
By the definition of $\vartheta_{i}$ we have $i\le \tau_{i}\le T$ almost surely.  We thus have%
\begin{align*}
Y_{i}^{\ast} &  \geq E_{i}\, Z_{\tau_{i}}\geq E_{i}\left(M_{\tau_{i}}-M_{i}+\vartheta_{i}\right)
= \vartheta_{i},
\end{align*}
by Doob's optional sampling theorem and the fact that $\vartheta_{i}\in\mathcal{F}_{i}.$
On the other hand we have $\vartheta_{i}$ $=$ $E_{i}\,\vartheta_{i}
\geq%
Y_{i}^{\ast}$ due to (\ref{uni}).
\end{proof}

\begin{remark}\label{surecont}
While in this paper we work in a  discrete time setting, it is obvious that Theorem~\ref{sureth}  can be proved in (almost) literally the same way for continuous time exercise as well.
\end{remark}

\section{Stability of surely optimal martingales}

\label{secnear} In equivalent terms, Theorem~\ref{sureth} states that, if a
martingale $M$ is such that the conditional variance of
\[
\vartheta_{i}{(M)}:=\max_{i\leq j\leq T}\left(  Z_{j}-M_{j}+M_{i}\right)
\]
is zero for some $0\le i\le T$, i.e.%
\[
\Var_{i}\,  \vartheta_{i}{(M)}  :=E_{i}\,(\vartheta_{i}(M)%
-E_{i}\vartheta_{i}(M))^{2}=0,\text{ a.s.,}
\]
then $\vartheta_{i}{(M)}=Y_{i}^{\ast}.$  Hence the martingale
$M$ is surely optimal at $i.$ In this section we present a stability
result for martingales $M$ which are, loosely speaking, close to be surely
optimal at some $i,$ in the sense that \Var$_{i}\,  \vartheta_{i}%
{(M)}  $ is small. More specifically, we provide mild conditions on a  sequence of martingales
$(M^{(n)})_{n\geq1}$  which
guarantee  that the
corresponding upper bounds converge to the Snell envelope in a sense, although the
sequence of martingales $(M^{(n)})$ does not necessarily converge. We have
the following result.

\begin{theorem}
\label{near} \label{stab} Let $i\in\{0,...,T\}$.
If \Var$_{i}\, \vartheta_{i}^{(n)}\overset{P}{\rightarrow}0$ for
$n\rightarrow\infty,$ where $\vartheta_{i}^{(n)}:=\vartheta_{i}(M^{(n)}),$ and if in addition the sequence
 of martingales $\left(  M_{i}^{(n)}\right)  _{n\geq1}$ is
uniformly integrable, then it holds
\[
E_{i}\, \vartheta_{i}^{(n)}\overset{L_{1}}{\rightarrow}Y_{i}^{\ast}.
\]
\end{theorem}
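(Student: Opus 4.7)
The plan is to exploit the one-sided inequality $E_i\vartheta_i^{(n)} \ge Y_i^\ast$ a.s., which holds for every martingale by (\ref{uni}). Because $E_i\vartheta_i^{(n)} - Y_i^\ast \ge 0$, we have
\[
E|E_i\vartheta_i^{(n)} - Y_i^\ast| = E\vartheta_i^{(n)} - EY_i^\ast,
\]
so the claim reduces to the one-sided bound $\limsup_n E\vartheta_i^{(n)} \le EY_i^\ast$.

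The idea is to imitate the stopping argument in the proof of Theorem~\ref{sureth} with a perturbed, $\mathcal{F}_i$-measurable surrogate for $\vartheta_i^{(n)}$. Write $\widetilde\vartheta_i^{(n)} := E_i\vartheta_i^{(n)}$ and, for each $\epsilon>0$, introduce
\[
\tau_\epsilon^{(n)} := T \wedge \inf\bigl\{j\ge i:\ Z_j - M_j^{(n)} + M_i^{(n)} \ge \widetilde\vartheta_i^{(n)} - \epsilon\bigr\}.
\]
Since $\widetilde\vartheta_i^{(n)}\in\mathcal{F}_i$, this is a bounded $\mathbb{F}$-stopping time (the crucial structural feature used in Theorem~\ref{sureth}). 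On the ``good'' event $C_{n,\epsilon} := \{\vartheta_i^{(n)} \ge \widetilde\vartheta_i^{(n)} - \epsilon\}$ the threshold is attained by some $j\le T$, so $Z_{\tau_\epsilon^{(n)}} - M_{\tau_\epsilon^{(n)}}^{(n)} + M_i^{(n)} \ge \widetilde\vartheta_i^{(n)} - \epsilon$ there. Using $EY_i^\ast \ge EZ_{\tau_\epsilon^{(n)}}$ together with Doob's optional sampling $EM_{\tau_\epsilon^{(n)}}^{(n)} = EM_i^{(n)}$ yields
\[
E\widetilde\vartheta_i^{(n)} - EY_i^\ast \ \le\ \epsilon + E\bigl[\mathbf{1}_{C_{n,\epsilon}^c}\bigl(\widetilde\vartheta_i^{(n)} + |M_{\tau_\epsilon^{(n)}}^{(n)}| + |M_i^{(n)}|\bigr)\bigr].
\]

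The probability of $C_{n,\epsilon}^c$ vanishes with $n$: conditional Chebyshev gives $P_i(C_{n,\epsilon}^c)\le 1\wedge(\Var_i\vartheta_i^{(n)}/\epsilon^2)$, and bounded convergence promotes $\Var_i\vartheta_i^{(n)}\overset{P}{\to}0$ to $P(C_{n,\epsilon}^c)\to 0$. The uniform integrability of $(M_i^{(n)})_n$, extended to the finitely many grid times $j\in\{i,\ldots,T\}$ (which is the natural reading of the hypothesis, since $M_j^{(n)}=E_jM_T^{(n)}$ or $M_i^{(n)}=E_iM_j^{(n)}$ propagates UI along a finite grid), then forces $E[\mathbf{1}_{C_{n,\epsilon}^c}|M_i^{(n)}|]\to 0$ and $E[\mathbf{1}_{C_{n,\epsilon}^c}|M_{\tau_\epsilon^{(n)}}^{(n)}|]\le \sum_{j=i}^T E[\mathbf{1}_{C_{n,\epsilon}^c}|M_j^{(n)}|]\to 0$. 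Finally, one controls $E[\mathbf{1}_{C_{n,\epsilon}^c}\widetilde\vartheta_i^{(n)}]$ via $0\le \widetilde\vartheta_i^{(n)} \le E_i\max_{i\le j\le T}Z_j + \sum_{j=i}^T|M_j^{(n)}|$ combined with the same UI. Sending first $n\to\infty$ and then $\epsilon\downarrow 0$ gives $\limsup_n(E\widetilde\vartheta_i^{(n)}-EY_i^\ast)\le 0$, which is what we needed.

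The main obstacle is the uniform control of the three remainder terms on $C_{n,\epsilon}^c$: the UI hypothesis as literally stated concerns only the single time $i$, while the stopped value $M_{\tau_\epsilon^{(n)}}^{(n)}$ and the surrogate $\widetilde\vartheta_i^{(n)}$ involve the martingale at later times. The cleanest resolution is to read the hypothesis as UI of the whole discrete martingale family over the finite exercise grid, since then UI transfers painlessly to $M_{\tau_\epsilon^{(n)}}^{(n)}$ and to $\widetilde\vartheta_i^{(n)}$ via the pointwise bound above; otherwise a more delicate argument extracting UI from the variance assumption is required.
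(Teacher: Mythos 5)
Your argument is correct and is essentially the paper's own proof: the same stopping time with threshold $E_i\vartheta_i^{(n)}-\epsilon$, the same conditional Chebyshev bound showing the bad event has vanishing probability, the same use of uniform integrability (read over the whole finite grid, exactly as the paper's proof implicitly requires when it bounds $U_i^{(n)}=|M_T^{(n)}-M_i^{(n)}+E_i\vartheta_i^{(n)}-\epsilon|$) to kill the remainder, and the same one-sided reduction via $E_i\vartheta_i^{(n)}\ge Y_i^\ast$. The only cosmetic differences are that the paper sends the stopping time to an auxiliary date $\partial>T$ with $Z_\partial=0$ instead of capping at $T$, and truncates at a level $K_\epsilon$ rather than invoking the ``UI plus vanishing probability'' lemma directly; your bound on $\widetilde\vartheta_i^{(n)}$ should keep the conditional expectation $E_i$ in front of the martingale terms, but that is immaterial since $E_i$ preserves uniform integrability.
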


\begin{proof}
Fix an $i\in\{0,...,T\}$ and suppose that the assumptions of the theorem are
satisfied. Now take an $\epsilon>0.$ By introducing an auxiliary time
$\partial>T$ and setting $Z_{\partial}=0$ we next define the stopping time%
\[
\tau_{i}^{(n)}=\inf\left\{  j\geq i:Z_{j}-M_{j}^{(n)}+M_{i}^{(n)}\geq
E_{i}\vartheta_{i}^{(n)}-\epsilon\right\}  \wedge\partial.
\]
We thus have with $M_{\partial}^{(n)}:=M_{T}^{(n)},$ $n\geq1,$
\begin{align*}
Y_{i}^{\ast} &  \geq E_{i}\,Z_{\tau_{i}^{(n)}}=E_{i}\,Z_{\tau_{i}^{(n)}%
}1_{\tau_{i}^{(n)}<\partial}\geq E_{i}\,\left(  M_{\tau_{i}^{(n)}}^{(n)}%
-M_{i}^{(n)}+E_{i}\vartheta_{i}^{(n)}-\epsilon\right)  1_{\{\tau_{i}%
^{(n)}<\partial\}}\\
&  \!\!\!=E_{i}\left(  M_{\tau_{i}^{(n)}}^{(n)}-M_{i}^{(n)}+E_{i}%
\vartheta^{(n)}-\epsilon\right)  -E_{i}\left(  M_{T}^{(n)}-M_{i}^{(n)}%
+E_{i}\vartheta_{i}^{(n)}-\epsilon\right)  1_{\{\tau_{i}^{(n)}=\partial\}}\\
&  \!\!\!=E_{i}\vartheta_{i}^{(n)}-\epsilon-E_{i}\left(  M_{T}^{(n)}%
-M_{i}^{(n)}+E_{i}\vartheta_{i}^{(n)}-\epsilon\right)  1_{\{\tau_{i}%
^{(n)}=\partial\}}\text{ \ \ \  a.s.,}%
\end{align*}
hence%
\begin{align}
E_{i}\vartheta_{i}^{(n)} &  \leq Y_{i}^{\ast}+\epsilon+E_{i}\left\vert
M_{T}^{(n)}-M_{i}^{(n)}+E_{i}\vartheta_{i}^{(n)}-\epsilon\right\vert
1_{\tau_{i}^{(n)}=\partial}\nonumber\\
&  =:Y_{i}^{\ast}+\epsilon+E_{i}\,U_{i}^{(n)}1_{\tau_{i}^{(n)}=\partial}\text{
\ \ a.s.}\label{low}%
\end{align}
Now it is easy to see that the family of random variables $\left(  U_{i}%
^{(n)}\right)  _{n\geq1}$ is uniformly integrable too. We so may take
$K_{\epsilon}>0$ such that%
\[
\sup_{n\geq0}E\,U_{i}^{(n)}1_{U_{i}^{(n)}>K_{\epsilon}}\leq\epsilon.
\]
Further observe that due to a conditional version of Chebyshev's inequality,
\[
0\leq E_{i}\,1_{\left\{  \tau_{i}^{(n)}=\partial\right\}  }=E_{i}1_{\left\{
\vartheta_{i}^{(n)}<E_{i}\vartheta_{i}^{(n)}-\epsilon\right\}  }\leq
\frac{{\Var}_i\,\vartheta_{i}^{(n)}}{\epsilon^{2}}\overset{P}{\rightarrow}0.
\]
Since the family $\left(  E_{i}\,1_{\left\{  \tau_{i}^{(n)}=\partial\right\}
}\right)  _{n\geq0}$ is bounded by $1$, it is uniformly integrable.
Hence, it follows that%
\begin{equation}
\,E_{i}1_{\left\{  \tau_{i}^{(n)}=\partial\right\}  }\overset{L_{1}%
}{\rightarrow}0.\label{convl1}%
\end{equation}
We thus have%
\begin{align*}
E\,U_{i}^{(n)}1_{\tau_{i}^{(n)}=\partial} &  =E\,U_{i}^{(n)}1_{U_{i}%
^{(n)}>K_{\epsilon}}1_{\tau_{i}^{(n)}=\partial}+E\,U_{i}^{(n)}1_{U_{i}%
^{(n)}\leq K_{\epsilon}}1_{\tau_{i}^{(n)}=\partial}\\
&  \leq\epsilon+K_{\epsilon}E\,1_{U_{i}^{(n)}\leq K_{\epsilon}}1_{\tau
_{i}^{(n)}=\partial}\leq\epsilon+K_{\epsilon}E\,E_{i}1_{\tau_{i}%
^{(n)}=\partial}<2\epsilon
\end{align*}
for $n>N_{\epsilon,K_{\epsilon}}$ by (\ref{convl1}). So for $n>N_{\epsilon
,K_{\epsilon}},$ we derive from (\ref{low})
\[
E\vartheta_{i}^{(n)}\leq EY_{i}+\epsilon+E\,U_{i}^{(n)}1_{\tau_{i}%
^{(n)}=\partial}\leq EY_{i}^{\ast}+3\epsilon.
\]
Thus,%
\[
\overline{\lim}_{n\rightarrow\infty}\text{ }E\vartheta_{i}^{(n)}\leq
EY_{i}^{\ast}+3\epsilon
\]
Since $\epsilon>0$ was arbitrary,%
\[
\overline{\lim}_{n\rightarrow\infty}\text{ }E\vartheta_{i}^{(n)}\leq
EY_{i}^{\ast}.
\]
On the other hand, due to (\ref{uni}) we have $E_{i}\vartheta_{i}^{(n)}\geq
Y_{i}^{\ast}$ a.s. for all $n,$ so
\[
0\leq\overline{\lim}_{n\rightarrow\infty}\,E\left\vert E_{i}\vartheta
^{(n)}-Y_{i}^{\ast}\right\vert =\overline{\lim}_{n\rightarrow\infty}\left(
E\vartheta_{i}^{(n)}-EY_{i}^{\ast}\right)  \leq0,
\]
which finally proves $E_{i}\vartheta_{i}^{(n)}\overset{L_{1}}{\rightarrow}Y_{i}^{\ast}.$
\end{proof}

\begin{remark}
Like Theorem~\ref{sureth}, Theorem~\ref{near} can be formulated in a continuous time setting as well with (almost) literally the same proof.
\end{remark}
The following simple example illustrates that Theorem~\ref{stab} would not be
true when the uniform integrability condition is dropped.

\begin{example}
\label{counter1} Take $T=1,$ $Z_{0}=Z_{1}=0,$ $M_0^{(n)}=0,$ $M_{1}^{(n)}=:-\xi_{n}$ with
$E_{0}\xi_{n}=0,$ $n=1,2,\ldots$ Then obviously $Y^*_0=0,$ and we have%
\[
\vartheta_{0}^{(n)}=\max(Z_{0}-M_{0}^{(n)},Z_{1}-M_{1}^{(n)})=\max(0,\xi
^{(n)})=\xi_{+}^{(n)}.%
\]
Now take%
\[
\xi^{(n)}=\left\{
\begin{tabular}
[c]{l}%
$1$ with Prob. $\frac{n-1}{n}$\\
$1-n$ with Prob. $\frac{1}{n}$%
\end{tabular}
\ \ \ \ \ \right.
\]
(hence $E_{0}\xi^{(n)}=0$). Then, for $n\rightarrow\infty$ we have
$\Var_{0}\, \vartheta_{0}^{(n)}
=E_{0}\, (\xi_{+}^{(n)})^{2}-\left(
E_{0}\xi_{+}^{(n)}\right)  ^{2}=\frac{n-1}{n}-\left(  \frac{n-1}{n}\right)
^{2}=\frac{n-1}{n^{2}}\rightarrow0,$ whereas $E_{0}\vartheta_{0}^{(n)}%
=E_{0}\xi_{n}^{+}=\frac{n-1}{n}\rightarrow1.$ Clearly, for
each $K>1,$  $E_{0}%
\,\left\vert M_{1}^{(n)}\right\vert 1_{\{\left\vert M_{1}^{(n)}\right\vert
>K\}}\geq\frac{n-1}{n}1_{\{n-1>K\}}\rightarrow1$ as $n\rightarrow\infty,$ hence the
$\left(  M_{1}^{(n)}\right)  $ are not uniformly integrable.
\end{example}
In view of the next Corollary,
Theorem~\ref{near} may be considered as a
stability theorem related to Theorem~\ref{sureth}.

\begin{corollary}
\label{stab1} Let $\mathcal{M}^{UI}$ be a set of uniformly integrable
martingales. Then for any $i\in\{0,\ldots,T\}$ it holds: For every  $\epsilon>0$ there exist a
$\delta>0$ such that
\[
\left[  M\in\mathcal{M}^{UI}\text{ \ \ and \ \ }E\,\Var_{i}\,\vartheta
_{i}(M)<\delta\right]  \text{ \ \ }\Longrightarrow\text{ \ }0\leq
E\,\vartheta_{i}(M)-Y_{i}^{\ast}<\epsilon.
\]

\end{corollary}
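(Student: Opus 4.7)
The plan is to deduce Corollary~\ref{stab1} from Theorem~\ref{near} by a routine sequential contradiction argument, so the theorem carries the real analytic weight. The lower bound $0\le E\vartheta_i(M)-EY_i^{\ast}$ is immediate: by (\ref{uni}) we have $E_i\vartheta_i(M)\ge Y_i^{\ast}$ a.s., and taking expectations gives it. So only the upper bound $E\vartheta_i(M)-EY_i^{\ast}<\epsilon$ actually needs work.

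Suppose the upper bound fails for some $\epsilon_0>0$. Then I can extract a sequence $(M^{(n)})_{n\ge 1}\subset \mathcal{M}^{UI}$ with $E\,\Var_i\vartheta_i(M^{(n)})\to 0$ while $E\vartheta_i(M^{(n)})-EY_i^{\ast}\ge\epsilon_0$ for every $n$. The first property gives $\Var_i\vartheta_i^{(n)}\to 0$ in $L^1$, which by Markov's inequality upgrades to convergence in probability; the second, once combined with the fact that $(M_i^{(n)})_{n\ge 1}$ is uniformly integrable by virtue of being drawn from $\mathcal{M}^{UI}$, puts us precisely in the hypotheses of Theorem~\ref{near}. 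That theorem yields $E_i\vartheta_i^{(n)}\xrightarrow{L_1} Y_i^{\ast}$, and taking expectations produces $E\vartheta_i^{(n)}\to EY_i^{\ast}$, contradicting the choice of the sequence. This closes the argument.

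The only subtle point I expect to pause at is the reading of the hypothesis ``set of uniformly integrable martingales'': for the contradiction to work via Theorem~\ref{near} one needs that any sequence drawn from $\mathcal{M}^{UI}$ produces a uniformly integrable family $(M_i^{(n)})_{n\ge 1}$ at time $i$, which is the condition Theorem~\ref{near} actually consumes. Example~\ref{counter1} shows this assumption cannot be dropped: without it the martingale tails can be pushed to infinity while $\Var_i\vartheta_i(M)$ stays arbitrarily small, and the implication fails. Under the intended reading the proof is essentially mechanical and I do not anticipate any genuine obstacle beyond bookkeeping.
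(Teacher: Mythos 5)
Your proposal is correct and follows essentially the same route as the paper: negate the statement to extract a sequence $(M^{(n)})\subset\mathcal{M}^{UI}$ with $E\,\Var_i\,\vartheta_i(M^{(n)})\to 0$ but $E\,\vartheta_i(M^{(n)})-EY_i^{\ast}\geq\epsilon_0$, upgrade the $L_1$-convergence of the conditional variances to convergence in probability, and invoke Theorem~\ref{near} together with the uniform integrability of the family to reach a contradiction. Your remark on how the hypothesis ``set of uniformly integrable martingales'' is consumed matches the paper's intent, so there is nothing further to add.
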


\begin{proof}
Suppose the statement is not true for some $i.$ Then there exists an
$\epsilon_{0}>0$ such that for all $n\in\mathbb{N}$ there exists a martingale
$M^{(n)}\in\mathcal{M}^{UI},$  for which $E\,\Var_{i}\,\vartheta_{i}%
(M^{(n)})<1/n$ and $E\,\vartheta_{i}(M^{(n)})-Y_{i}^{\ast}\geq$ $\epsilon
_{0}.$ Since convergence in $L_{1}$ implies convergence in probability along a subsequence (indexed again by $n$) we thus
have $\Var_{i}\,\vartheta_{i}(M^{(n)})\overset{P}{\rightarrow}0,$ and
$E\,\left\vert \vartheta_{i}(M^{(n)})-Y_{i}^{\ast}\right\vert \geq$
$\epsilon_{0}$ along this subsequence. This contradicts Theorem \ref{near}.
\end{proof}

\begin{remark}\label{wien}
Theorem~\ref{near} is important in practical situations, for
instance, for (possibly high dimensional) underlyings of jump-diffusion type in a L\'evy-It\^o setup. In this
environment we may consider the following class of uniformly integrable martingales.

\medskip

\noindent Let $W$ be an $m$-dimensional Brownian motion and let $N$ denote a Poisson random measure, independent of $W$, with (deterministic) compensator measure $\nu(s,du)ds$ such that
\begin{align*}
\int_0^t \int_{\mathbb{R}^q} (u^2 \wedge 1) \nu(s,du)ds < \infty, ~ 0 \leq t \leq T.
\end{align*}
Let $(\mathcal{F}_t)_{0 \leq t \leq T}$ be the filtration generated by $W$ and $N$, augmented by null sets. Now let $X$ be a $D$-dimensional Markov process, adapted to $(\mathcal{F}_t)$, and consider the mappings $c: [0,T] \times \mathbb{R}^D \to \mathbb{R}_{\geq 0}$ and $d: [0,T] \times \mathbb{R}^D \times \mathbb{R}^q \to \mathbb{R}_{\geq 0}$ satisfying
\begin{align}\label{eq:dummy01}
&E \int_0^T |c(s,X_s)|^2 ds < \infty, \quad E \int_0^T \int_{\mathbb{R}^q} |d(s,X_{s},u)|^2 \nu(s,du)ds< \infty.
\end{align}
\noindent We define the class  of uniformly integrable martingales, $\mathcal{M}^{UI}$, as the set of all martingales $M$ satisfying
\begin{align*}
M_t &= M_0 + M^c_t + M^d_t\\
&= M_0 + \int_0^t \varphi^c(s,X_s) dW_s + \int_0^t \int_{\mathbb{R}^q} \varphi^d(s,X_{s},u) \tilde{N}(ds,du),
\end{align*}
where $\varphi^c$ and $\varphi^d$ satisfy
\begin{align*}
|\varphi^c| \leq c, \quad  |\varphi^d| \leq d,
\end{align*}
and $\tilde{N} = N -\nu$ is the compensated Poisson measure.
Note that $M$ is indeed a martingale and that  the expected quadratic variation of $M$ is given by
\begin{align*}
E\,\big[ M,M \big]_t &= E\int_0^t |\varphi^c(s,X_s)|^2 ds + E\int_0^t \int_{\mathbb{R}^q} |\varphi^d(s,X_s,u)|^2 \nu(s,du)ds \\
&\leq E\int_0^t |c(s,X_s)|^2 ds + E\int_0^t \int_{\mathbb{R}^q} |d(s,X_s,u)|^2 \nu(s,du)ds.
\end{align*}
We then have for every $t\in [0,T]$ ,
\begin{align*}
\sup_{M \in \mathcal{M}^{UI}} E |M_t|^2 &\leq \sup_{M \in \mathcal{M}^{UI}} E \sup_{0\leq t \leq T}|M_t|^2 \leq \sup_{M \in \mathcal{M}^{UI}} C E \big[ M,M \big]_T < \infty,
\end{align*}
where the second estimation results from the Burkholder-Davis-Gundy inequality and the third estimation follows from \eqref{eq:dummy01}. Finally, an application of the de la Vall\'ee Poussin criterion yields that $\mathcal{M}^{UI}$ is indeed a family of uniformly integrable martingales.
\end{remark}

\section{New dual algorithms for pricing of \newline Bermudan
derivatives}\label{secalg}

In this section we consider the design of a new dual algorithm for solving
multiple stopping problems, hence pricing Bermudan products, which are based on the
theoretical insights from Theorem~\ref{sureth}, Theorem~\ref{near}, and Corollary~\ref{stab1}. In the following, we provide an
assessment of the merits of variance minimizing dual algorithms based on these results.

\subsection{Merits of variance minimizing dual algorithms}\label{secalg0}

Let $Q$ be some index set and $\mathcal{M}=\{M^{q}:q\in Q\}$
be a set of uniformly integrable martingales such that $\mathcal{M}$ contains
a martingale $M^{q^{\ast}}$ which is surely optimal at $i=0.$ Suppose that for
any $q\in Q$ we have $N$ samples of $\vartheta_{0}(M^{q,(n)}),$
$n=1,...,N.$ Based on these samples we may estimate $\Var_{0}\,\vartheta_{0}%
(M^{q})=\Var\,\vartheta_{0}(M^{q})$ as usual by%
\begin{align}
\Var^{(N)}\,\vartheta_{0}(M^{q}) &  :=\frac{1}{N-1}\sum_{n=1}^{N}\left(
\vartheta_{0}(M^{q,(n)})-\overline{\vartheta_{0}(M^{q})_N}\right)
^{2},\text{ \ \ with}\nonumber\\
\overline{\vartheta_{0}(M^{q})_N} &  :=\frac{1}{N}\sum_{n=1}^{N}%
\vartheta_{0}(M^{q,(n)}).\label{VE}%
\end{align}
So, in principle, only two realizations ($N=2$) would be enough to identify a
$q^{\ast}$ such that
\[
0=\Var\,\vartheta_{0}(M^{q^{\ast}})=\Var^{(2)}\,\vartheta_{0}(M^{q^{\ast}}%
)=\min_{q\in Q}\sum_{n=1}^{2}\left(  \vartheta_{0}(M^{q,(n)})-\overline
{\vartheta_{0}(M^{q})_N}\right)  ^{2},
\]
and then obtain $Y_{0}^{\ast}=$ $\vartheta_{0}(M^{q^{\ast}})=\vartheta
_{0}(M^{q^{\ast},(1)}).$ Due to this stylized argumentation we may expect that
in a case where although the set $\mathcal{M}$ doesn't contain a martingale
that is surely optimal at $i=0$ but at least one martingale $M^{q}$ such that
$\Var\,\vartheta_{0}(M^{q})$ is \textquotedblleft small
enough\textquotedblright, we only need a relatively small sample size $N$ to
identify this martingale, leading to a tight upper bound $Y_0^{up}:=E\vartheta_{0}%
(M^{q}).$ In the following, we formalize this idea by giving precise estimates for the variance estimators in terms of the family of uniformly integrable martingales.

\medskip

Suppose we want to obtain an upper bound which is bounded from above by $Y_{0}^{\ast}%
+\epsilon$ for some given $\epsilon>0.$ Consider a family of uniformly integrable martingales $\mathcal{M}=\{M^{q}:q\in Q\}$ which is \textquotedblleft rich
enough\textquotedblright\ in the sense that there exists a $\delta>0$
according to Corollary~\ref{stab1}, such that
\[
\{M\in\mathcal{M}: \Var\,\vartheta_{0}(M)<\kappa\delta
\}\neq\varnothing,
\]
for some $0<\kappa<1.$ Then, in particular, we have $\inf_{q\in Q}\Var\,\vartheta
_{0}(M^{q})<\kappa\delta$ which implies that for any $q \in Q$ we have $E \Var\,\vartheta
_{0}(M^{q}) = \Var\,\vartheta
_{0}(M^{q})<\delta$. Now Corollary~\ref{stab1} again yields $0\leq E\vartheta_{0}(M^{q})-Y_{0}^{\ast
}<\epsilon$. For a given set of realizations
$\vartheta_{0}(M^{q,(n)}),$ $n=1,...,N,$ $q\in Q,$ we now may try to find an optimal $q_{N}^{\circ}\in Q$ arising as the solution of the minimization problem%
\begin{equation}
\inf_{q\in Q}\Var^{(N)}\,\vartheta_{0}(M^{q})= \Var^{(N)}\,\vartheta_{0}%
(M^{q_{N}^{\circ}}). \label{Varmin}
\end{equation}
For convenience, we assume that such a $q_{N}^{\circ}$ does exist. Furthermore, we assume the existence of
$q^{\circ}\in Q$ which satisfies
$$
 \inf_{q\in Q}\Var\,\vartheta_{0}(M^{q})=\Var\,\vartheta_{0}(M^{q^{\circ}}).
$$
Let $0\le \alpha\ll 1$ be a small threshold probability.
Then one can show under mild conditions on the family of random variables
$\{\vartheta_{0}(M^{q}):q\in Q\},$
that for some constant $C>0$ and quantile coefficient $c_{\alpha}>0$ (only depending on $\alpha$) we have with
probability larger than $1-\alpha,$ %
\begin{eqnarray}
\Var^{(N)}\,\vartheta_{0}(M^{q})  & \leq & \Var\,\vartheta_{0}(M^{q})\left(
1+c_{\alpha}\sqrt{\frac{C}{N}}\right)  ,\label{esti}\\
\left(
1-c_{\alpha}\sqrt{\frac{C}{N}}\right)\Var\,\vartheta_{0}(M^{q})  & \leq& \Var^{(N)}\,\vartheta_{0}(M^{q}),\quad  \text{for}\quad q\in\{q^\circ, q^{\circ}_N\}\notag
\end{eqnarray}
(see Appendix for details). Thus, with probability larger than $1-\alpha,$%
\begin{eqnarray*}
\left(
1-c_{\alpha}\sqrt{\frac{C}{N}}\right)\Var\,\vartheta_{0}(M^{q_{N}^{\circ}})&\leq& \Var^{N}\,\vartheta_{0}(M^{q_N^{\circ}%
})\le  \Var^{N}\,\vartheta_{0}(M^{q^{\circ}%
})\\
&\leq& \Var\,\vartheta_{0}(M^{q^{\circ}})\left(  1+c_{\alpha}\sqrt{\frac
{C}{N}}\right),
\end{eqnarray*}
which yields
$$
\Var\,\vartheta_{0}(M^{q_{N}^{\circ}})\leq\kappa\delta\frac{1+c_{\alpha}\sqrt{\frac{C}{N}%
}}{1-c_{\alpha}\sqrt{\frac{C}{N}%
}}.
$$
This implies that for
\begin{equation}\label{SaS}
N=\inf\left\{n:\kappa\frac{1+c_{\alpha}\sqrt{\frac{C}{n}%
}}{1-c_{\alpha}\sqrt{\frac{C}{n}%
}}  <1\right\}
\end{equation}
we have $P\left[\Var\,\vartheta_{0}(M^{q_{N}^{\circ}})<\delta\right]>1-\alpha.$ Thus, by Corollary~\ref{stab1}, the dual upper bound
due to the  martingale $M^{q_{N}^{\circ}}$ identified by \eqref{Varmin} falls below $Y^*_0+\epsilon$ with probability larger than $1-\alpha.$  As a main feature, equation \eqref{SaS} demonstrates that the smaller $\kappa,$ the fewer samples $N$ we may choose for the identification of $q_{N}^{\circ}$.

\medskip

The above argumentation suggests to minimize the estimated variance of the dual estimator over a parametric set of martingales using a
relatively small sample size $N.$ However, as the parametric set of martingales $\mathcal{M}$ needs to be ``rich enough'', in practice there may be many parameters involved, which in turn may lead to a non-convex minimization problem with many
local minima. As a remedy to this problem,  rather than directly minimizing the variance of the dual estimator at time zero, we propose to
minimize backwardly the expected conditional variances $E\,\Var_i\, \vartheta_{i}(M^q)$ over $q\in Q,$ starting from $i=T$ (where
the conditional variance is trivially zero) down to $i=0,$ using a simple but effective recursive relationship between $\vartheta_{i}(M^q)$
and $\vartheta_{i+1}(M^q)$ as explained in the next subsection. For this backward minimization procedure the arguments above apply as well and moreover, as we will see, it opens the possibility for linear regression, hence also  the possibility for fast numerical implementations.

\subsection{Backward dual variance minimization}

\label{secalg1}

Motivated by Section~\ref{secalg0} we now develop a backward recursive simulation
based algorithm for the construction of a dual martingale $M$ that yields tight
upper bounds. In view of a such a Monte Carlo approach, we assume a Markov setting generated by some underlying Markov process
$X:=\left(  X_{t}\right)  _{0\leq t\leq T},$ and a cash-flow of the form
$Z_{j}:=Z_{j}(X_{j}):= Z(j,X_j).$ First we describe the algorithm in a pseudo language which
involves terms such as conditional expectations and conditional variances. Then, we spell out an implementable Monte Carlo algorithm where these expressions are replaced by their
empirical counterparts.

\medskip

To start out  on a pseudo algorithmic level we
construct a martingale $M$ backwardly in a recursive way by establishing that from $i=T$ down to $i=0$ the
expected conditional variances $E\,$Var$_{i}\vartheta_{i}(M)$ are
\textquotedblleft as small as possible\textquotedblright\ in a sense that we will describe. The
martingale $M$ is such that for $j>i$, any increment%
\begin{equation}
M_{j}-M_{i}\text{ is\ measurable with respect to }\Delta
\mathcal{F}_{i,j}:=\sigma\{X_{s}:{i}\leq s\leq {j}\}.\label{meas}%
\end{equation}
It is easy to see that the Doob martingale of the Snell envelope
meets this measurability property, however, in general Theorem~\ref{surechar} yields that there may exist many other surely optimal martingales satisfying this property.

\medskip

A corner stone of the whole procedure is the following recursion
that holds for any martingale $M$ and any $i<T,$%
\begin{align}
\vartheta_{i}(M) &  =\max\left(  Z_{i},\max_{i+1\leq j\leq T}\left(
Z_{j}-M_{j}+M_{i}\right)  \right) \nonumber\\
&  =\max\left(  Z_{i},\vartheta_{i+1}(M)+M_{i}-M_{i+1}\right)  \nonumber\\
&  =Z_{i}+\left(  \vartheta_{i+1}(M)+M_{i}-M_{i+1}-Z_{i}\right)  ^{+}.\label{eq:was01}
\end{align}
Obviously, at every $i=0,...,T,$ $\vartheta_{i}(M)$ only depends on $\left(
M_{j}-M_{i}\right)  _{i\leq j\leq T},$ and at the starting time $i=T$ we
initially have $\vartheta_{T}(M)=Z_{T}$ which trivially satisfies
$E\,$Var$_{T}\left(  \vartheta_{T}(M)\right)  =0.$ Note that if $M$ were already surely optimal, i.e. $\vartheta(M)$ were already equal to $Y^\ast$, then Remark \ref{impalg} would imply that for $U_i := \vartheta_{i+1}(M)+M_{i}-M_{i+1}-Z_{i}$, $(U_i)^+ = \vartheta_{i}(M) - Z_i = Y^\ast_i - Z_i$ is already $\mathcal{F}_i$-measurable.

\medskip

Now the essential idea is comprised in the following backward induction: Assume that for $i+1\leq T$ we
have constructed the increments $\left(  M_{j}-M_{i+1}\right)
_{i+1\leq j\leq T}$ and $\vartheta_{i+1}(M).$ Now the task is to find a random variable $\xi_{i+1}$ such
that%
\begin{equation}
\xi_{i+1}\text{ is }\Delta\mathcal{F}_{i,i+1}\text{-measurable, \ \ \ \ }%
E_{i}\xi_{i+1}=0,\label{coxi}%
\end{equation}
that solves the following minimization problem
\begin{eqnarray}
\xi_{i+1}&:=&\underset{\xi\in\Delta\mathcal{F}_{i,i+1},\text{ }E_{i}\xi=0}%
{\arg\min}E\,\Var_{i}\vartheta_{i}(M(\xi))\notag \\
&=&\underset{\xi\in\Delta
\mathcal{F}_{i,i+1},\text{ }E_{i}\xi=0}{\arg\min}E\,\Var_{i}\left(
\vartheta_{i+1}(M)-\xi-Z_{i}\right)  ^{+}.\label{vm}%
\end{eqnarray}
Intuitively, $\xi_{i+1}$ represents the optimal martingale increment and thus, we put $M_{j}(\xi_{i+1})-M_{i}(\xi_{i+1}):=M_{j}-M_{i+1}+\xi_{i+1}$ for $j\geq i+1$. By construction, the random variable $\xi_{i+1}$
satisfies (\ref{coxi}), therefore, we obtain a set of martingale increments
$\left(  M_{j}(\xi_{i+1})-M_{i}(\xi_{i+1})\right)  _{i\leq j\leq T}$, which has now been extended from $j=i+1$ to $j=i$ and  which
satisfies for $j\geq i+1,$
\begin{align*}
M_{j}(\xi_{i+1})-M_{i+1}(\xi_{i+1})&=M_{j}(\xi_{i+1})-M_{i}(\xi_{i+1}%
)+M_{i}(\xi_{i+1})-M_{i+1}(\xi_{i+1})\\
&=M_{j}-M_{i+1}
\end{align*}
and by construction also the measurability requirement (\ref{meas}). Now we extend the increments $\left(  M_{j}-M_{i+1}\right)
_{i+1\leq j\leq T}$ from $j=i+1$ to $j=i$ by setting
$$\left(  M_{j}-M_{i}\right)_{i\leq j\leq T} = \left(  M_{j}(\xi_{i+1})-M_{i}(\xi_{i+1})\right)  _{i\leq j\leq T}$$
Finally we put
\[
\vartheta_{i}(M)=Z_{i}+\left(  \vartheta_{i+1}(M)-\xi_{i+1}-Z_{i}\right)
^{+}.
\]
After carrying out these steps backwardly from $i=T$ down to $i=0$ we
end up with a family of martingale increments $\left(  M_{j}%
-M_{0}\right)  _{0\leq j\leq T},$ hence a martingale $\left(  M_{j}\right)
_{0\leq j\leq T}$, as $M_{0}=0$ without loss of generality. This
martingale will be subsequently used to compute a dual upper bound for
$Y_{0}^{\ast}$ via
\begin{align*}
Y^{up}_0 = E \max_{0 \leq j \leq T} \big( Z_j - M_j \big).
\end{align*}

\medskip

The key step in the above procedure is to find a solution to minimization problem \eqref{vm}: Suppose that the martingale increments satisfying (\ref{coxi})
for some fixed $i$ may be parametrized as $\xi_{i+1}(\beta)$ where $\beta$ is
some generic parameter. Based on a set of simulated trajectories of $X$ one may
then estimate for some $\beta$ (which we specify in more details below) the conditional variance%
\[
E\,\Var_{i}\left(  \vartheta_{i+1}(M)-\xi_{i+1}(\beta)-Z_{i}\right)
^{+}=:E\,\Var_{{i}}U_{i}^{+}(\beta)
\]
by using e.g. kernel estimators (e.g. see \cite{Liero}), and next
minimize with respect to $\beta.$ In particular when the dimension of the
parameter space is very small (typically one-dimensional) this may lead to a feasible Monte Carlo
procedure. However, if the set of martingale increments $\xi_{i+1}(\beta)$ is
\textquotedblleft rich enough\textquotedblright\ and is moreover linearly
structured in $\beta,$ that is
\[
\xi_{i+1}(\beta)=\sum_{k=1}^{K}\beta_{k}\mathfrak{m}_{i+1}^{(k)},
\]
where $\beta = (\beta_1,\ldots,\beta_K)\in\mathbb{R}^{K}$ and the random variables $\mathfrak{m}_{i+1}^{(k)}, k=1,\ldots,K,$ satisfy (\ref{coxi}) for $K\geq 1$ sufficiently large, it is in general more effective to solve the dominating
problem%
\begin{align}
\underset{\beta\in\mathbb{R}^{K}}{\arg\min}E\,\Var_{{i}}U_{i}(\beta) &
:=\underset{\beta\in\mathbb{R}^{K}}{\arg\min}E\,\Var_{{i}}\left(  \vartheta
_{i+1}(M)-\xi_{i+1}(\beta)-Z_{i}\right)  \nonumber\\
&  =\underset{\beta\in\mathbb{R}^{K}}{\arg\min}E\,\Var_{{i}}\left(
\vartheta_{i+1}(M)-\sum_{k=1}^{K}\beta_{k}\mathfrak{m}_{i+1}^{(k)}\right)
.\label{plo1}%
\end{align}
The reason is twofold. On the one hand, if we succeed to find $\beta^{\circ
}\in\mathbb{R}^{K}$ such that $E\,\Var_{{i}}U_{i}(\beta^{\circ})$ is
sufficiently small (if it were zero, we would have arrived at an surely optimal martingale increment), then since
\[
\underset{\beta\in\mathbb{R}^{K}}{\arg\min}E\,\Var_{X_{T_{i}}}U_{i}^{+}%
(\beta)\leq E\,\Var_{i}U_{i}^{+}(\beta^{\circ})\leq E\,\Var_{{i}}U_{i}(\beta^{\circ}),
\]
$E\,\Var_{{i}}U_{i}^{+}(\beta^{\circ})$ is generally even closer to zero
and so $\beta^{\circ}$ can be considered a good approximation to (\ref{vm}) as
well. On the other hand, most importantly, problem (\ref{plo1}) can be treated
as a linear regression problem,
\begin{equation}
\left[  \beta^{\circ},\gamma^{\circ}\right]  =\underset{\beta
\in\mathbb{R}^{K},\gamma\in\mathbb{R}^{K_1}}{\arg\min}E\,\left|  \vartheta_{i+1}(M)-\sum_{k=1}^{K}%
\beta_{k}\mathfrak{m}_{i+1}^{(k)}-\sum_{k=1}^{K_{1}}\gamma_{k}\psi_{k}%
({i},X_{{i}})\right|  ^{2},\label{alin}%
\end{equation}
which employs an additional set of basis functions $\psi_{k}(t,x),$ $k=1,...,K_{1}.$
To see this, note that (\ref{alin}) is equivalent with%
\begin{align*}
\left[  \beta^{\circ},\gamma^{\circ}\right]    & =\underset{\beta
\in\mathbb{R}^{K},\gamma \in \mathbb{R}^{K_1}}{\arg\min}EE_{i}\,\left(  \vartheta_{i+1}(M)-\sum_{k=1}%
^{K}\beta_{k}\mathfrak{m}_{i+1}^{(k)}-E_{i}\vartheta_{i+1}(M)\right.  \\
& \left.  +E_{i}\vartheta_{i+1}(M)-\sum_{k=1}^{K_{1}}\gamma_{k}\psi_{k}%
({i},X_{{i}})\right)  ^{2}\\
& =\underset{\beta
\in\mathbb{R}^{K},\gamma \in \mathbb{R}^{K_1}}{\arg\min}\left\{  E\,\Var_{i}%
\left(  \vartheta_{i+1}(M)-\sum_{k=1}^{K}\beta_{k}\mathfrak{m}_{i+1}%
^{(k)}\right)  \right.  \\
& \left.  +E\,\left(  E_{i}\vartheta_{i+1}(M)-\sum_{k=1}^{K_{1}}\gamma_{k}%
\psi_{k}({i},X_{{i}})\right)  ^{2}\right\}  ,
\end{align*}
hence $\beta^{\circ}$ satisfies (\ref{plo1}), and moreover for $\gamma^{\circ
}$ it holds%
\begin{equation}
\gamma^{\circ}=\underset{\gamma\in\mathbb{R}^{K_1}}{\arg\min}\,E\,\left(
E_{i}\vartheta_{i+1}(M)-\sum_{k=1}^{K_{1}}\gamma_{k}\psi_{k}({i},X_{{i}%
})\right)  ^{2}.\label{ga0}%
\end{equation}
\ \\
Further, the regression procedure (\ref{alin}) delivers as by-product
\[
\mathcal{C}_{i}(x):=\sum_{k=1}^{K_{1}}\gamma_{k}^{\circ}\psi_{k}(i,x),
\]
an approximate continuation function that may be used afterwards to define a
stopping rule and to simulate a corresponding lower biased estimation of
$Y_{0}^{\ast}.$

\begin{remark}\label{applications}
(i) In virtually all practical applications we are in a setting as described
in Remark~\ref{wien}. In this environment we may model $\xi_{i+1}$ as linear
combinations of the form%
\begin{align}
{\xi}_{i+1}(\beta):=&\sum_{k=1}^{N_1}\beta_{k}^{c}\int_{T_{i}}^{T_{i+1}%
}\varphi_{k}^{c}(s,X_{s})dW_{s}\nonumber\\
&+\sum_{k=1}^{N_2}\beta_{k}^{d}\int_{T_{i}%
}^{T_{i+1}}\varphi_{k}^{d}(s,X_{s},u)d\widetilde{N}(ds,du),\label{wie}%
\end{align}
where $N_1+N_2=K$ and $\varphi_{k}^{c}(s,x)$ and $\varphi_{k}^{d}(s,x,u)$ are
suitable sets of basis functions satisfying the conditions in Remark~\ref{wien}. In this setting, we have
\begin{align*}
\mathfrak{m}_{i+1}^{(k)} = \int_{T_{i}}^{T_{i+1}}\varphi_{k}^{c}(s,X_{s})dW_{s} + \int_{T_{i}}^{T_{i+1}}\varphi_{k}^{d}(s,X_{s},u)d\widetilde{N}(ds,du)
\end{align*}
and $\beta = (\beta^c_1,\ldots,\beta^c_{N_1},\beta^d_1,\ldots,\beta^d_{N_2}) \in \mathbb{R}^K$.

\medskip

As an alternative, we may also take
\begin{equation}
{\xi}_{i+1}(\beta):=\sum_{k=1}^{K}\beta_{k}\left(  B_{i+1}^{(k)}%
-B_{i}^{(k)}\right)  \label{as}%
\end{equation}
for an arbitrary given set of discounted tradables $\left(  B_{j}%
^{(k)}\right)  _{0\leq j\leq T}$ where
the $B_{j}^{(k)}=B^{(k)}(j,X_{j})$ are provided by some specific problem under
consideration. For example it may happen that discounted European options
are available in closed form. In any case, (\ref{wie}) and (\ref{as}) satisfy
the requirements (\ref{coxi}) for any vector parameter $\beta\in\mathbb{R}^K.$\\
\ \\
(ii) Suppose that the system of basis martingale increments and basis
functions in the regression based minimization (\ref{alin}) is sufficiently
\textquotedblleft rich\textquotedblright\ that there even exist $\beta
^{\circ\circ}$ and $\gamma^{\circ\circ}$ such that
\[
\vartheta_{i+1}(M)-\sum_{k=1}^{K}\beta_{k}^{\circ\circ}\mathfrak{m}%
_{i+1}^{(k)}-\sum_{k=1}^{K_{1}}\gamma_{k}^{\circ\circ}\psi_{k}({i},X_{{i}})=0\text{ \ a.s.}%
\]
then one would need only one trajectory for $X$ to identify $\beta
^{\circ\circ}$ and $\gamma^{\circ\circ}$ via (\ref{alin}). This is a similar
situation as discussed in Section~\ref{secalg0}: In practice when the system (\ref{wie})
is rich enough, a relatively low sample size will be sufficient to solve
(\ref{alin}) effectively.  This phenomenon will be confirmed by our
experiments in Section~\ref{newsecnum}.
\end{remark}

\subsection*{Description of the Monte Carlo algorithm}
Let us now spell out the empirical, implementable counterpart of the procedure described above. Based on a set of trajectories $\left(  X_{j}^{(n)}\right)
_{j=0,...,T},$ $n=1,...,N,$ we carry out the following procedure.
\bigskip

\medskip

\emph{Step 1:} At $i=T$ we set on each trajectory $\vartheta_{T}^{(n)}:=\vartheta
_{T}^{(n)}(M):=Z_{T}(X_{T}^{(n)})$ and $\left(  M^{(n)}_{j} -M^{(n)}_{T}\right)  _{T\leq j\leq T} = M^{(n)}_T - M^{(n)}_T = 0$ for $n=1,...,N$.

\bigskip

\emph{Step 2:} For $n=1,...,N$ let $\left(  M^{(n)}_{j} -M^{(n)}_{i+1}\right)  _{i+1\leq j\leq T}$ be constructed. For $i=T-1$ down to $i=0$, based on the $N$ samples, we solve the
regression problem
\[
\left[  \widehat{\beta}^{(i)},\widehat{\gamma}^{(i)}\right]  :=\underset
{\beta\in\mathbb{R}^K,\gamma\in\mathbb{R}^{K_1}}{\arg\min}\frac{1}{N}\sum_{n=1}^{N}\,\left(
\vartheta_{i+1}^{(n)}(M)-\sum_{k=1}^{K}\beta_{k}\mathfrak{m}_{i+1}%
^{(k,n)}-\sum_{k=1}^{K_{1}}\gamma_{k}\psi_{k}({i},X_{{i}}^{(n)})\right)
^{2}.
\]
We then put%
\begin{align*}
&\widehat{\xi}_{i+1}^{(n)} :=\sum_{k=1}^{K}\widehat{\beta}_{k}^{(i)}%
\mathfrak{m}_{i+1}^{(k,n)},\\ 
&M^{(n)}_{j} -M^{(n)}_{i} := \left(  M^{(n)}_{j} -M^{(n)}_{i+1}\right) + \widehat{\xi}_{i+1}^{(n)},
\end{align*}
and
\[
\vartheta_{i}^{(n)}(M):=Z_{i}^{(n)}+\left(  \vartheta_{i+1}^{(n)}%
(M)-\widehat{\xi}_{i+1}^{(n)}-Z_{i}^{(n)}\right)  ^{+}.%
\]

\bigskip

\emph{Step 3:} We simulate $\widetilde{N}$ new independent samples $\left(
\widetilde{X}_{j}^{(n)}\right)  _{j=0,...,T},$ $n=1,...,\widetilde{N}$, which give rise to the new martingale samples
\[
\widetilde{M}^{(n)}_i = \sum_{j=1}^{i}\sum_{k=1}^{K}\widehat{\beta}_{k}^{(j)}\widetilde{\mathfrak{m}}_{j}^{(k,n)},\text{ \ \ }k=1,...,K,\text{
\ \ }n=1,...,\widetilde{N}.
\]

Then, an upper biased estimate for the upper bound is given by%
\begin{align}\label{eq:nix01}
\widehat{Y}_{0}^{\text{up}}:=\frac{1}{\widetilde{N}}\sum_{n=1}^{\widetilde{N}%
}\max_{0\leq i\leq T}\left(  Z_{i}^{(n)}(\widetilde{X}_{i}^{(n)})-\sum
_{j=1}^{i}\sum_{k=1}^{K}\widehat{\beta}_{k}^{(j)}\widetilde{\mathfrak{m}%
}_{j}^{(k,n)}\right)  .
\end{align}

\bigskip

\emph{Step 4:} Based on the stopping rule%
\begin{align}\label{eq:nix02}
\tau_{0}(X_i):=\inf\{i\geq 0: \text{ }Z_{i}(X_{i})\geq\sum_{k=1}^{K_{1}}%
\widehat{\gamma}_{k}^{(i)}\psi_k({i},X_{i})\}
\end{align}

we put
\[
\widehat{Y}_{0}^{\text{low}}:=\frac{1}{\widetilde{N}}\sum_{n=1}^{\widetilde
{N}}Z_{\tau_{0}(\widetilde{X}^{(n)})}^{(n)}(\widetilde{X}_{\tau_{0}%
(\widetilde{X}^{(n)})}^{(n)}).
\]
which yields lower biased estimate to $Y^\ast_0$.

\bigskip

At this point, let us briefly compare our algorithm with the algorithm from \citet{BBS}. The methodology of \citet{BBS} to compute dual martingales is built upon a procedure to numerically approximate Clark-Ocone derivatives of an approximative Snell envelope $Y$ with respect to a Wiener filtration. 
The key ingredient in \citet{BBS} is to approximate this Clark-Ocone derivative on a (fine) grid $\pi = \{t_0,\ldots,t_N \}$ which contains the exercise grid $\{0,1,\ldots,T\}$
using the estimator,
\begin{align}\label{eq:diskret}
Z^\pi_{t_j} := \frac{1}{\Delta^\pi_j}\mathbb{E}_{t_j} [\Delta^\pi W_j ~ Y_{i+1}],
\end{align}
where $\Delta^\pi_j = t_{j+1} - t_j$ and $\Delta^\pi W_j = W_{t_{j+1}} - W_{t_{j}}$. Due to \eqref{eq:diskret}, \citet{BBS} morally requires to carry out a regression at each $t_j \in \pi$ on the fine grid $\pi$. However, our algorithm only needs to carry out regressions on the coarser grid of the possible exercise dates $\{0,\ldots,T\} \subset \pi$. Moreover, note that \eqref{eq:diskret} requires as an input some approximation of $Y$, which needs to be obtained by another method, such as the method of \cite{LS}.
Furthermore, if the grid $\pi$ happens to be very fine, i.e. if $|\pi| = \sup_j |t_{j+1}-t_j| = \varepsilon $ for some very small $\varepsilon>0$, the complexity increases, and also the right-hand of \eqref{eq:diskret} becomes very large and even explodes as $\varepsilon$ approaches zero. To circumvent these instabilities, \cite{BBS} implement regressions on the coarser exercise time grid and then locally interpolate on the finer grid $\pi$. In our algorithm,  these problems do not appear at all. Finally, we underline that obtaining numerically the Clark-Ocone derivative \eqref{eq:diskret} in a non-Wiener filtration (e.g. filtratons generated by L\'evy processes) is not so straightforward. In contrast,  in our framework, the regression procedure \eqref{alin} may include jump martingales as depicted in Remark \ref{wien}.

\section{Numerical examples}\label{newsecnum}

\newcommand{\D}{\displaystyle}

In this section we present the numerical results of the backward algorithm described in Section \ref{secalg}. The performance and accuracy of our algorithm is illustrated by testing it with two benchmark examples from the literature, a Bermudan basket-put on 5 assets and Bermudan max-call on 2 and 5 assets (see \citet{BKS2} and \citet{BBS} respectively). In both examples, the risk-neutral dynamic of each asset is governed by
\[
 dX_t^d=(r-\delta)X_t^d dt+\sigma X_t^d dW_t^d,\quad d=1,...,D,
\]
where $D\in\mathbb{N}$ is the number of assets, $W_t^d$, $d=1,...,D$, are independent one-dimensional Brownian motions, and $r,\delta$ and $\sigma$ are constant real valued parameters. Exercise opportunities are equally spaced at times $T_{j}=\frac{jT}{J},\,j=0,...,J$. The discounted payoff from exercise at time $t$ is given by
\[
Z_t(X_t)=e^{-rt}(K-\frac{X_t^1+\ldots+X_t^D}{D})^{+} \quad \text{ for the Bermudan basket-put,}
\]
and
\[
Z_t(X_t)=e^{-rt}(\max(X_t^1,\ldots,X_t^D)-K)^{+} \quad \text{ for the Bermudan max-call},
\]
where we denote $X_t = (X_t^1,\ldots,X_t^D)$. For both products, the time interval $[T_j, T_{j+1}]$, $j = 0,$ $\ldots,$ $J-1$, is partitioned into $L$ equally spaced subintervals of width $\Delta t = \frac{T}{N}$ with $N = J \times L$.

\medskip

The implementation can be outlined as follows. We first simulate $M$ independent samples of  Brownian increments
$$\Delta W_i = (\Delta W_i^{1,(m)},\ldots, \Delta  W_i^{D,(m)}),\quad i = 1,\ldots, N, ~ m=1,\ldots,M.$$
Then the trajectories of $X_i^{(m)} = (X_i^{1,(m)},\ldots,X_i^{D,(m)})$, $i = 1,\ldots,N$, $m=1, \ldots,M$, are given by
\begin{equation}
X_i^{d,(m)} = X_{i-1}^{d,(m)} \exp \left( (r-\delta-\frac{1}{2} \sigma^2)\Delta t + \sigma \Delta W_i^{d,(m)}\right), \label{incr}
\end{equation}
for $d=1, \ldots,D$ and initial data $X_0=(X^1_0,\ldots,X^D_0)$.

\medskip

We now carry out the backward Monte Carlo regression algorithm as described in Section~\ref{secalg}. In this Wiener setting, we recall Remark \ref{applications} (i) and choose as the spanning family of surely optimal martingales the Wiener integrals $\mathfrak{m}_{i+1}^{(k)} = \int_{T_{i}}^{T_{i+1}}\varphi_{k}^{c}(s,X_{s})dW_{s}$. More precisely, we solve in a first step the regression problem backward in time
\begin{align}
(\widehat{\beta}^{(i)},\widehat{\gamma}^{(i)}) &  := \underset{(\beta,\gamma)}%
{\arg\min} \frac1M\sum_{m=1}^{M}\left[  \vartheta_{i+1}^{(m)}-\sum_{k=1}^{K}\beta_{k}\int_{i}^{i+1}\varphi_{k}(u,X_{u}^{(m)})dW_{u}^{(m)}\right.  \nonumber\\
&  \left.  -\sum_{k=1}^{K_{1}}\gamma_{k}\psi_{k}(i,X_{i}^{(m)})\right]  ^{2}, \quad i=T-1,\ldots,0, \label{same}
\end{align}
for two families of basis functions
$\left(  \varphi_{k}\right)=\left(  \varphi_{k}%
^{(d)}\right)$ with $\varphi_{k}%
^{(d)}=\varphi_{k}^{(1)}$, and $\left(  \psi_{k}\right)$, chosen as explained below.
In (\ref{same}) the Wiener integrals are approximated by the standard Euler scheme, using the same
Brownian increments as in (\ref{incr}). Finally, a new independent simulation is launched and we estimate an upper bound $\widehat{Y}_{0}^{up}$ and a lower bound $\widehat{Y}_{0}^{low}$ by means of \eqref{eq:nix01} and \eqref{eq:nix01}.

\medskip

As one may expect, the choice of basis functions is crucial to obtain tight upper and lower bounds. In this respect, special information on the pricing problem may help us finding suitable basis functions. One way of retrieving additional information is to employ martingales representations and Malliavian calculus techniques to obtain more specific insights into the structure of the pricing dynamics. We illustrate this by considering the following stylized setting: By the Markov property of $X$, we have that $E_t\left(Z_T(X_T)\right) = f(t,X_t)$ for some measurable function $f(t,x)$ and $0 \leq t \leq T=T_J$. Let us assume that $f(t,x)$ is differentiable in $x$. Then, by It\^o's formula and the fact that $E_t(Z_T)$ is a martingale we have
\[
Z_T(X_T) - E_{T_{J-1}}\left(Z_T(X_T)\right) = \sum_{d=1}^D \sigma \int_{T_{J-1}}^T f_{x^d} (t,X_t) X_t^d dW_t^d.
\]
Recall that $\widehat{\vartheta}_{T} =Z_T$ and $E_{T_{J-1}}\left(Z_T(X_T)\right)$ can be expressed in the following form
\[
E_{T_{J-1}}\left(Z_T(X_T)\right) = e^{-rT_{J-1}}EP(T_{J-1},X_{T_{J-1}};T),
\]
where $EP(t,x;T)$ is the price of the corresponding European option with maturity $T$ at time $t$. Thus,
it is natural to choose from time $T$ to time $T_{J-1}$  European option values for the basis
$\left(\psi_{k}(t,x)\right)$
and the corresponding European deltas multiplied by the value of the underlying asset for the basis
$\left(  \varphi_{k}(t,x)\right)$.
Although for the following steps ($t <T_{J-1}$) there is no easy way to predict optimal choices of $(\psi_{k})$ and $(\varphi_{k})$, the above analysis suggests to always include the still-alive European options into the basis $(\psi_{k})$ and include the information on the European deltas into the basis $(\varphi_{k})$.
In fact, based on similar arguments, this choice of basis functions were
already proposed in \citet{BBS}.

\subsection{Bermudan basket-put}
\label{secbsp}
In this example, we take the following parameter values,
\[
r = 0.05, \quad \delta=0,\quad \sigma =0.2,\quad D =5,\quad T =3,
\]
and
\[
X_0^1= \ldots= X_0^D = x_0,\quad K =100.
\]
We perform the simulation of the underlying asset $X$ from \eqref{incr} with a time step size $\Delta t =0.01$. For $T_j \leq t < T_{j+1}$, $j=0,\ldots,J-1$, we choose the set
$$\Big\{1, Pol_3(X_t), Pol_3(EP(t,X_t;T_{j+1})),Pol_3(EP(t,X_t;T_J))\Big\}$$
as basis functions $(\psi_{k})$, where $Pol_n(y)$ denotes the set of monomials of degree up to $n$ in the components of a vector $y$ and $EP(t,X;T)$ denotes the (approximated) value of a European basket-put with maturity $T$ at time $t$. Recall that the family $(\psi_{k})$ serves as the regression basis for the continuation value. Further we choose
$$\left\{1,\left(X_t^d\frac{\partial EP(t,X_t;T_{j+1})}{\partial X_t^d}\right)_{1 \leq d \leq D}, \left(X_t^d\frac{\partial EP(t,X_t;T_J)}{\partial X_t^d}\right)_{1 \leq d \leq D} \right\}$$
as a regression basis $(\varphi_{k})$ spanning the family of the surely optimal martingales.
Since there is no closed-form formula for the still-alive European basket-put, we use the moment-matching method to approximate their values (see e.g. \citet{Brigo}, and \citet{Lor}). To this end, Let $\D S_t = \frac{X_t^1+\ldots+X_t^D}{D},$ and consider another asset $G_t$ whose risk-neutral dynamic follows
\[
dG_t = r G_t dt + \tilde{\sigma} G_t d W_t^1,
\]
where  $\tilde{\sigma}$ is a constant. The value of the European put on this asset  can be easily computed by the well-known Black-Scholes formula, that is,
\begin{equation} \label{num_bs}
E[ e^{-rT}(K-G_T)^{+}] = BS(G_0,r,\tilde{\sigma},K,T).
\end{equation}
If $S_T$ and $G_T$ have the same moments up to two, then the Black-Scholes price in \eqref{num_bs} can be regarded as a good approximation for the value of the European basket-put $E\left(e^{-rT}(K-S_T)^{+}\right),$
for details see \cite{Lor}. Since
\[
\begin{split}
E(S_T) &= \frac{1}{D}\sum_{d=1}^DX_0^d e^{rT},\\
E(S_T^2) &=\frac{1}{D^2}e^{2rT} \left( \sum_{i,j=1}^D X_0^i X_0^j \exp(1_{i=j}\sigma^2 T) \right)
\end{split}
\]
and
\[
E(G_T) = G_0 e^{rT},\qquad
E(G_T^2) = G_0^2 e^{2rT+\tilde{\sigma}^2T},
\]
we can simply set
\[
G_0 = \frac{1}{D}\sum_{d=1}^DX_0^d
\]
and
\[
\tilde{\sigma}^2 = \frac{1}{T} \ln \left(\frac{1}{(\sum_{d=1}^DX_0^d)^2} \sum_{i,j=1}^D X_0^i X_0^j \exp(1_{i=j}\sigma^2 T) \right).
\]
The European deltas can be approximated by
\[
\frac{\partial BS}{\partial G_0}\frac{\partial G_0 }{\partial X_0^d} = -\mathcal{N}(-d_1) \frac{1}{D}, \quad d = 1,\ldots,D,
\]
where $\D d_1= \frac{\ln(\frac{G_0}{K})+(r+\frac{\tilde{\sigma}^2}{2})T}{\tilde{\sigma}\sqrt{T}}$ and $\mathcal{N}$ denotes the cumulative standard normal distribution function. These formulas are straightforwardly extended to the pricing at times $t>0$.

\medskip

The numerical results are shown in Table~\ref{Tab1}. We use 1000 paths for estimating the surely optimal martingale and the continuation function via the regression procedure. Another 300000 paths are used to compute the lower bound and 100000 paths are used to compute the upper bound. Note that we have chosen a relatively small number of samples (1000) for estimating the martingale in the regression procedure. We do so because on the assumption that the family of uniformly integrable martingales is rich enough, the arguments leading to \eqref{SaS} yield that only a small number of samples are required for identifying a good approximation to a surely optimal martingale. We compare our results to the price intervals obtained in \citet{BKS2} which are displayed in the last column of Table \ref{Tab1}. In our C++ implementation, the run-times for computing one set of lower and upper bounds are in the range of 15-20 minutes.


\medskip


\begin{table}[ht]
\caption{Lower and upper bounds for Bermudan basket-put on 5 assets with parameters $r = 0.05$, $\delta=0$, $\sigma =0.2$, $K=100$, $T=3$ and different $J$ and $x_0$ }
\label{Tab1}
\medskip%
\centering
\begin{tabular}
[c]{|c|c|c|c|c|}\hline
$J$ & $x_{0}$ & Low (SE)	& Up (SE) & BKS Price Interval\\
\hline
& 90 & 10.000 (0.000) & 10.000 (0.000) &  [10.000, 10.004] \\
3 & 100 & 2.164 (0.007) & 2.172 (0.001)  & [2.154, 2.164] \\
& 110 & 0.539 (0.004) & 0.551 (0.001) &   [0.535, 0.540] \\
\hline
& 90 & 10.000 (0.000) & 10.000 (0.000) &   [10.000, 10.000] \\
6 & 100 & 2.407 (0.006) & 2.432 (0.001) & [2.359, 2.412] \\
& 110 & 0.573 (0.003) & 0.609 (0.001) &  [0.569, 0.580] \\
\hline
& 90 & 10.000 (0.0000) & 10.008 (0.0003) &  [10.000, 10.005] \\
9 & 100 & 2.475 (0.0063)  & 2.522 (0.0013)  &  [2.385, 2.502]  \\
& 110 & 0.5915 (0.0034) & 0.6353 (0.0009)&    [0.577, 0.600] \\
\hline
\end{tabular}
\end{table}

\subsection{Bermudan max-call}
\label{secbmc}
We use the same parameter values as in Section~\ref{secbsp} except $\delta=0.1$ and $D=2$ or $5$. As in the previous example we use European (call) options in the basis
 $(\psi_k)$ and the corresponding deltas in the basis
$(\varphi_k).$
The value of the European max-call option is computed by the following formula (\citet{John}),
\begin{align}\label{eq:max_call_price}
C_{max} &= \sum_{l=1}^{D}X_{0}^{l}\frac{e^{-\delta T}}{\sqrt{2\pi}%
}\int_{(-\infty,d_{+}^{l}]}\exp[-\frac{1}{2}z^{2}]%
{\displaystyle\prod_{\substack{l^{\prime}=1\\l^{\prime}\neq l}}^{D}}
\mathcal{N}\left(
\frac{\ln\frac{X_{0}^{l}}{X_{0}^{l^{\prime}}}}{\sigma
\sqrt{T}}-z+\sigma\sqrt{T}\right)  dz \nonumber\\
&  -K e^{-rT}+K e^{-rT}%
{\displaystyle\prod_{l=1}^{D}}
\left(  1-\mathcal{N}\left(  d_{-}^{l}\right)  \right),
\end{align}
where
\[
d_{-}^{l}:=\frac{\ln\frac{X_{0}^{l}}{K}+(r-\delta-\frac{\sigma^{2}}{2}%
)T}{\sigma\sqrt{T}},\quad d_{+}^{l}=d_{-}^{l}+\sigma\sqrt{T}.%
\]
Moreover, straightforward computations reveal that the deltas are given by
\begin{align}
\label{eq:max_call_deltas}
\frac{\partial C_{max}}{\partial X^l_0} &=  \frac{e^{-\delta T}}{\sqrt{2\pi}}\int_{(-\infty,d_{+}^{l}]}\exp[-\frac{1}{2}z^{2}] {\displaystyle\prod_{\substack{l^{\prime}=1\\l^{\prime}\neq l}}^{D}}\mathcal{N}\left(\frac{\ln\frac{X_{0}^{l}}{X_{0}^{l^{\prime}}}}{\sigma\sqrt{T}}-z+\sigma\sqrt{T}\right)  dz,
\end{align}
and that $C_{max}$ satisfies the linear homogeneity\footnote{Compare also with \cite[eq. (9)]{John}.}
\begin{align}\label{eq:max_call_linear}
C_{max} &= \sum_{l=1}^D X_0^l \frac{\partial C_{max}}{\partial X^l_0} + K \frac{\partial C_{max}}{\partial K}.
\end{align}


\begin{table}[ht]
\caption{Lower and upper bounds for Bermudan max-call with parameters $r = 0.05$, $\delta=0.1$, $\sigma =0.2$, $K=100$, $T=3$ and different $D$ and $x_0$.  }
\label{Tab2a}
\medskip%
\centering
\begin{tabular}
[c]{|c|c|c|c|c|}\hline
$D$& $x_{0}$ & Low (SE) & Up (SE) & A\&B price interval \\
\hline
   & 90 &  8.0556 (0.0219) & 8.15655 (0.0034) & [8.053, 8.082] \\
 2 & 100 & 13.8850  (0.0276) &  14.0293 (0.0044) & [13.892, 13.934] \\
   & 110 & 21.3671  (0.0319) &  21.5319 (0.0048) & [21.316, 21.359] \\
\hline
   & 90  &   16.5973 (0.0296) & 16.7963 (0.0058) & [16.602, 16.655]\\
 5 & 100 &   26.1325 (0.0356) & 26.3803 (0.0072) & [26.109, 26.292]\\
   & 110 &   36.7348 (0.0403) &  37.0856 (0.0082) & [36.704, 36.832]\\
\hline
\end{tabular}
\end{table}

The numerical results are shown in Table~\ref{Tab2a}. They are
based on 1000 paths for the regression procedure, 300000 paths for computing the lower bound  and 100000 paths for computing the upper bound. As before, we have chosen a relatively small number of samples (1000) for estimating the martingale in the regression procedure. This is again allowed because the arguments leading to \eqref{SaS} and the assumption that the choice of the basis functions indeed equips us with a rich enough family of uniformly integrable martingales yield that only a small number of samples are required for identifying a good approximation to a surely optimal martingale. The integral expressions from \eqref{eq:max_call_price} and \eqref{eq:max_call_deltas} are numerically evaluated using a simple adaptive Gauss-Kronrod procedure with $31$ points. The price intervals in the last column are quoted from \citet{AB}. In our C++ implementation, for each set of lower and upper bounds, we observe run-times that are in the range of 10-25 minutes, with the longer computation times for the 5-dimensional case.


\subsection*{Concluding remark}
The numerical results presented in Tables~\ref{Tab1}, \ref{Tab2a} due to our new algorithm
may be considered as very satisfactory given the decreased computation times (which are in the order of minutes in a C++ compiled implementation). In this respect it should be noted that computing upper bounds (in a rather generic way) in order of minutes is a considerable improvement compared to  \citet{BKS2}, whose upper bounds are computed with nested Monte Carlo simulation requiring higher computation time, and of comparable range to \citet{BBS}. Moreover, the algorithm delivers fast and surprisingly good lower bounds while the upper bounds are about the same range as the ones obtained with the algorithm in \citet{BBS}. Needless to say that, as
for the method of \citet{BBS}, the performance of the here presented algorithm will highly depend on the choice of the basis functions. An in depth treatment of this issue is considered beyond scope however.

\section{Appendix}

We present in this section some well-known facts from theory of empirical processes which are used to establish the relation \eqref{esti} in Section \ref{secalg0}.

\medskip

Let $\{\vartheta^{q}:q\in Q\}$ be a family of random variables and let for
each $q\in Q,$ $\vartheta^{q,1},...,\vartheta^{q,N}$ be i.i.d. samples of
$\vartheta^{q}.$ For each $q\in Q$ we consider the unbiased variance estimator%
\[
\Var^{(N)}\,\vartheta^{q}:=\frac{1}{N-1}\sum_{n=1}^{N}\left(  \vartheta
^{q,n}-\overline{\vartheta_{N}^{q}}\right)  ^{2}\text{ \ \ with \ \ }%
\overline{\vartheta_{N}^{q}}:=\frac{1}{N}\sum_{n=1}^{N}\vartheta^{q,n},
\]
hence $E\,\Var^{(N)}\,\vartheta^{q}=\Var\,\vartheta^{q}$ for $q\in Q.$ From
standard statistical theory it is well known that with $\mu^{q}:=E\vartheta
^{q}$
\[
\Var\left(  \Var^{(N)}\,\vartheta^{q}\right)  \leq\frac{1}{N}E\left(
\vartheta^{q}-\mu^{q}\right)  ^{4}=\frac{\left(  \Var\,\vartheta^{q}\right)
^{2}}{N}E\left(  \frac{\vartheta^{q}-\mu^{q}}{\sqrt{\Var\,\vartheta^{q}}%
}\right)  ^{4}.
\]
Now, as a mild condition we assume that,%
\[
E\left(  \frac{\vartheta^{q}-\mu^{q}}{\sqrt{\Var\,\vartheta^{q}}}\right)
^{4}\leq C\text{ \ \ for all }q\in Q.
\]
For example, this holds if $E\exp\left[  \lambda\left\vert
\vartheta^{q}\right\vert \right]  <\infty$ for some $\lambda>0$ and all $q\in
Q.$ We so have in particular%
\[
\Var\left(  \Var^{(N)}\,\vartheta^{q_{N}^{\circ}}\right)  \leq\frac{C}{N}\left(
\Var\,\vartheta^{q_{N}^{\circ}}\right)  ^{2}\text{ \ \ and \ \ }\Var\left(
\Var^{(N)}\,\vartheta^{q^{\circ}}\right)  \leq\frac{C}{N}\left(  \Var\,\vartheta
^{q^{\circ}}\right)  ^{2},
\]
while, strictly speaking,  the randomness of $\vartheta^{q_{N}^{\circ}}$ in the first inequality is ignored. However,  by considering a next from $\vartheta^{q_{N}^{\circ}}$ independent sample, we can show that this is not really essential (the details would go beyond the scope of the optional
analysis of Section~\ref{secalg0} and are therefore omitted).
From standard empirical probability theory it now follows that for any (small) $0\leq\alpha\ll1,$ there is a suitable quantile  coefficient
$c_{\alpha}$ (particularly not depending on $N$) such that
\begin{align*}
&\mathbb{P}\Bigg[  \Var^{(N)}\,\vartheta^{q}\leq \Var\,\vartheta^{q}\left(
1+c_{\alpha}\sqrt{\frac{C}{N}}\right),\Var\,\vartheta^{q}\leq \Var^{(N)}\,\vartheta^{q}+c_{\alpha}\sqrt{\frac{C}{N}}\Var\,\vartheta^{q}  \Bigg] \\
&\geq 1-\alpha,
\end{align*}
for $q\in\{q_{N}^{\circ},q^{\circ}\}$ which implies (\ref{esti}).

\subsection*{Acknowledgements}

J.S. is grateful to Denis Belomestny for interesting discussions on this
topic, and to Christian Bender for an inspiring remark, which led to
counterexample Example~\ref{counter1}.

\end{document}